\documentclass[journal,10pt]{IEEEtran}
\usepackage{xcolor}

\usepackage{eso-pic}
\usepackage{amsmath,amssymb,amsthm,amsfonts,bm}
\usepackage{graphicx}
\usepackage{algorithm}
\usepackage[noend]{algpseudocode}
\usepackage{booktabs}
\usepackage{cite}
\usepackage{xcolor}
\usepackage{stmaryrd}
\usepackage{url}
\usepackage{tikz}
\usetikzlibrary{arrows.meta,positioning,decorations.pathreplacing,patterns,shapes.geometric}
\usepackage{url}
\usepackage[colorlinks=true,linkcolor=blue,citecolor=blue,urlcolor=blue]{hyperref}

\graphicspath{{figs/}}

\newtheorem{theorem}{Theorem}

\newtheorem{corollary}[theorem]{Corollary}
\newtheorem{proposition}[theorem]{Proposition}
\newtheorem{remark}{Remark}

\newcommand{\bw}{\mathbf{w}}
\newcommand{\ba}{\mathbf{a}}
\newcommand{\bp}{\mathbf{p}}
\newcommand{\bq}{\mathbf{q}}
\newcommand{\bC}{\mathbf{C}}
\newcommand{\bI}{\mathbf{I}}
\newcommand{\bR}{\mathbf{R}}
\newcommand{\bv}{\mathbf{v}}
\newcommand{\bef}{\mathbf{f}}
\newcommand{\bg}{\mathbf{g}}
\newcommand{\bM}{\mathbf{M}}

\newcommand{\Rg}{\bR_{\!g}}
\newcommand{\Reff}{\bR_{\!\text{eff}}}
\newcommand{\Cset}{\mathbb{C}}
\newcommand{\Mtot}{M_{\rm tot}}
\newcommand{\Rset}{\mathbb{R}}
\newcommand{\Tset}{\mathbb{T}}
\newcommand{\Gset}{\mathcal{G}}

\newcommand{\Mset}{\mathcal{M}}
\newcommand{\bA}{\mathbf{A}}

\newcommand{\bx}{\mathbf{x}}

\newcommand{\bu}{\mathbf{u}}

\newcommand{\bDelta}{\boldsymbol{\Delta}}
\newcommand{\E}{\mathbb{E}}
\DeclareMathOperator*{\argmax}{arg\,max}

\DeclareMathOperator{\re}{Re}
\DeclareMathOperator{\im}{Im}
\DeclareMathOperator{\tr}{tr}

\DeclareMathOperator{\Retr}{Retr}
\AtBeginDocument{
  \setlength{\abovedisplayskip}{4pt plus 1pt minus 2pt}
  \setlength{\belowdisplayskip}{4pt plus 1pt minus 2pt}
  \setlength{\abovedisplayshortskip}{2pt plus 1pt minus 1pt}
  \setlength{\belowdisplayshortskip}{2pt plus 1pt minus 1pt}
  \setlength{\textfloatsep}{5pt plus 2pt minus 2pt} % 图片/表格与正文之间的上下间距 (Default is often 20pt!)
  \setlength{\floatsep}{4pt plus 2pt minus 2pt}     % 两个相邻图表之间的间距
  \setlength{\intextsep}{5pt plus 2pt minus 2pt}    % 强制插入正文中间的图表间距 (e.g., [h])
}
\makeatletter
\def\section{\@startsection{section}{1}{\z@}%
  {1.5ex plus 1.0ex minus 1.0ex}% <- 缩小这行: 标题上方的空隙 (Space above)
  {0.4ex plus 0.5ex minus 0ex}%  <- 缩小这行: 标题下方的空隙 (Space below)
  {\normalfont\normalsize\centering\scshape}}

\def\subsection{\@startsection{subsection}{2}{\z@}%
  {1.5ex plus 1.0ex minus 1.0ex}% <- 缩小这行: 标题上方的空隙
  {0.4ex plus 0.5ex minus 0ex}%  <- 缩小这行: 标题下方的空隙
  {\normalfont\normalsize\itshape}}
\makeatother
\makeatletter
\renewenvironment{IEEEproof}[1][\IEEEproofname]{%
  \par\pushQED{\qed}\itshape
  \trivlist
  \item[\hskip\labelsep
        \scshape #1\@addpunct{:}]\ignorespaces
}{%
  \popQED\endtrivlist\@endpefalse
}
\makeatother

\begin{document}
% Submission notice on the first page only
\AddToShipoutPictureFG*{%
  \AtPageUpperLeft{%
    \raisebox{-0.30in}[0pt][0pt]{%
      \makebox[\paperwidth][c]{%
        \fontsize{7}{8}\selectfont
        \textit{This work has been submitted to the IEEE for possible
        publication. Copyright may be transferred without notice,
        after which this version may no longer be accessible.}%
      }%
    }%
  }%
}
% \title{A Unified Dual-Analysis Framework for Sparse-Array \\
% Near-Field Beam Focusing under Spatial \\
% Interference Suppression}

\title{A Unified Dual Framework for Sparse-Array Near-Field Beam Focusing With Spatial Interference Suppression}

\author{
\IEEEauthorblockN{Changhao He,
                  Xiaojuan Zhang,~\IEEEmembership{Senior Member,~IEEE,} Francois Chin Po Shin}
%\thanks{Manuscript received \today; revised XX. This work was supported in part by \ldots.}
 \thanks{C. H. He is with King Abdullah University of Science and Technology, Saudi Arabia (changhao.he@kaust.edu.sa).
 X. J. Zhang and C. P. Shin are with the Institute for Infocomm Research, Agency for Science, Technology and Research, Singapore (\{xzhang, Francois\_Chin\}@a-star.edu.sg).
 Corresponding author: X. J. Zhang. }

}

\maketitle
\begin{abstract}
We study sparse-array near-field beam focusing with spatial interference
suppression, a problem arising in coherent satellite formations and other
distributed non-terrestrial arrays. State-of-the-art designs solve it
numerically through second-order cone programming (SOCP) with cutting-plane
refinement, yet the achievable signal-to-interference ratio (SIR) and its link
to classical adaptive beamforming have remained without an analytical
characterization. We supply this characterization via a Lagrangian-dual
analysis, obtaining three results. First, every optimal beamformer is a
generalized matched filter against an effective spatial covariance induced by
an optimal dual measure; this closed form recovers MVDR, LCMV, and SOCP-based
focusing as special cases. Second, the dual measure has finite support of
cardinality at most $M^2$ in general, sharpening to $M$ for uniform linear
arrays ($M$ the number of array elements), which yields a finite-dimensional
convergence certificate for cutting-plane methods. Third, a closed-form upper
bound on the mean-SIR admits an asymptotic logarithmic scaling law in $M$
under near-collinear geometry, identifying array order, rather than the optimization algorithm, as the dominant performance factor.
A Riemannian conjugate-gradient algorithm on the unit-torus manifold is developed for practical constant-modulus beamforming, and numerical results demonstrate that it closely approaches the derived performance limit.
\end{abstract}

\begin{IEEEkeywords}
Sparse arrays, near-field beamforming, MVDR, LCMV, manifold optimization,
semi-infinite programming, Lagrangian duality, constant modulus.
\end{IEEEkeywords}

\section{Introduction}\label{sec:intro}

\IEEEPARstart{T}{he} migration toward regenerative-payload non-terrestrial
networks (NTN) lets low-Earth-orbit (LEO) satellites act as on-orbit base
stations rather than transparent relays~\cite{ref:3gpp-ntn-r19}. A downlink
toward an airborne terminal then shares spectrum with terrestrial users
beneath the footprint, and co-channel emissions at ground level can exceed the
interference margins of the underlying cell. The mitigation is to focus
radiated energy at the airborne target while constraining the field over a
designated terrestrial protection region; realized through a coherent
multi-satellite formation, this is the \emph{spatially-constrained near-field
beam focusing} problem studied here. The same structure: a sparse set of
distributed coherent elements focusing a converging wavefront at a target
while attenuating the field over a continuum of constrained locations also
arises in ultra-large-scale arrays~\cite{ref:nf-survey} and distributed
MIMO~\cite{ref:dmimo}; the analysis below is therefore stated for a general
distributed array and instantiated on the LEO formation.

\subsection{Three Principal Challenges}
Designing a coherent satellite formation that focuses energy at an
airborne target while protecting a continuous spatial region raises
three principal challenges and addressed by the framework developed in this paper.

\emph{(C1) Semi-infinite constraints.} A continuous protection region
produces infinitely many constraints, demanding cutting-plane
methods~\cite{ref:kelley1960,ref:hettich1993} whose convergence and
inter-sample leakage need careful analysis.

\emph{(C2) Geometric degeneracy.} Because the target altitude is far
smaller than the orbital altitude, target and interferer steering
vectors become highly correlated, imposing an algorithm-independent
ceiling on the signal-to-interference ratio (SIR).

\emph{(C3) Constant-modulus implementation.} Saturated power
amplifiers in NTN payloads force a per-element constant-modulus
constraint, replacing closed-form linear solutions with non-convex
manifold optimization.

\subsection{Related Work and Motivation}

\emph{Multibeam satellite precoding.}
Multibeam satellite systems provide the classical setting for satellite
precoding. Early work focused on inter-beam interference suppression
under aggressive frequency reuse in geostationary high-throughput
satellites (HTS), leading to linear (ZF/MMSE) and multicast precoders
implemented at the gateway under fixed-payload
assumptions~\cite{ref:vazquez2016,ref:joroughi2017,ref:perez2019}.
More recently, precoding for LEO non-terrestrial networks (NTN) has
considered statistical CSI, hybrid analog/digital architectures, and
hardware impairments such as nonlinear power amplifiers and
low-resolution phase shifters~\cite{ref:you2022hybrid,ref:you2022twin},
as surveyed in~\cite{ref:chen2024survey}. Despite these advances,
existing approaches operate within a \emph{single} satellite payload
and model the array as a co-located aperture.

% \emph{Distributed and cooperative satellite beamforming.}
% Departing from the single-satellite paradigm, a recent line of work
% treats the constellation itself as a distributed antenna array.
% Cell-free and distributed massive-MIMO formulations cooperate across
% multiple satellite access points to serve ground users without rigid
% beam boundaries~\cite{ref:abdelsadek2022dmimo,ref:abdelsadek2021cf};
% scalable user-centric clustering schemes have been developed to cope
% with asynchronous arrivals across distributed
% nodes~\cite{ref:humadi2024dynamic}. On the implementation side,
% multi-satellite MU-MIMO precoding has been demonstrated over the air
% with full carrier-phase synchronization~\cite{ref:storek2020testbed},
% and wideband distributed beamforming concepts have been explored for
% satellite swarms~\cite{ref:merlano2024swarm}. Although these works
% establish that distributed satellite arrays provide additional spatial
% degrees of freedom, the achievable performance under
% \emph{spatial-protection constraints} on a continuous ground region,
% which is the central question of this paper, has not been
% systematically characterized.

\emph{Distributed and cooperative satellite beamforming.}
More recently, satellite constellations have been studied as
distributed antenna arrays. Cell-free and distributed massive-MIMO
frameworks cooperate across multiple satellite access
points~\cite{ref:abdelsadek2022dmimo,ref:abdelsadek2021cf}, while
user-centric clustering schemes address asynchronous
transmissions~\cite{ref:humadi2024dynamic}. Experimental
multi-satellite MU-MIMO precoding with carrier-phase
synchronization~\cite{ref:storek2020testbed} and wideband distributed
beamforming for satellite swarms~\cite{ref:merlano2024swarm} further
demonstrate the feasibility of distributed beamforming. However,
existing studies do not characterize the achievable performance under
continuous spatial-protection constraints, which is the focus of this
paper.

% \emph{Near-field beam focusing on extra-large arrays.}
% A parallel strand of work, motivated by extra-large aperture arrays
% (ELAAs) at sub-THz/THz frequencies, has shown that spherical-wavefront
% near-field operation enables \emph{beam focusing} jointly in angle
% and range, providing a degree of freedom unavailable in the
% far-field regime~\cite{ref:nf-survey,ref:lu2022xlarray,ref:you2024nfbm}.
% Near-field beamforming has been extended to wideband and integrated
% sensing-and-communication settings~\cite{ref:wang2024beamfocus}, and
% modular sub-array models bridge the gap between dense ELAAs and
% physically distributed apertures~\cite{ref:li2022modular}. For
% sparse, kilometer-scale satellite formations, however, the
% inter-platform spacing dominates the wavelength by many orders of
% magnitude, and the near-field regime is induced not by aperture
% density but by the \emph{long focal range} from orbit to airborne
% target. Whether the angle-range focusing freedom of the ELAA literature
% survives this geometry, and at what fundamental SIR cost, has not
% been explicitly addressed.

\emph{Near-field beam focusing on extra-large arrays.}
Near-field beam focusing has attracted significant attention for
extra-large aperture arrays (ELAAs), where spherical-wave propagation
enables joint angle--range focusing beyond the capabilities of
far-field beamforming~\cite{ref:nf-survey,ref:lu2022xlarray,ref:you2024nfbm}.
Recent studies have extended this framework to wideband and integrated
sensing-and-communication systems~\cite{ref:wang2024beamfocus}, while
modular sub-array architectures bridge dense ELAAs and distributed
arrays~\cite{ref:li2022modular}. In sparse kilometer-scale satellite
formations, however, the near-field regime arises from the long
propagation distance rather than aperture density. Whether the
angle--range focusing gain persists under this geometry, and the
resulting fundamental SIR limit, remains largely unexplored.

% \emph{Position-/user-domain focusing and near-field interference control.}
% A related terrestrial line exploits the angle--range degree of freedom for
% location- or user-domain focusing separating users co-located in angle but
% distinct in range and for near-field interference nulling toward specified
% locations~\cite{ref:wu2023ldma, ref:zhang2022beamfocus}. These settings
% protect a \emph{discrete} set of user positions with dense, co-located
% apertures. The present problem differs on both axes: protection is enforced
% over a \emph{continuous} ground region rather than discrete points, and the
% near-field regime is induced by the long orbit-to-target focal range of a
% \emph{sparse} formation rather than by aperture density, which is precisely
% what makes the geometry near-collinear and the SIR ceiling unavoidable.

\emph{Position-/user-domain focusing and near-field interference control.}
A related terrestrial line exploits angle--range focusing for
location- or user-domain beamforming, separating users co-located in
angle but distinct in range, and for near-field interference
suppression toward specified locations~\cite{ref:wu2023ldma,
ref:zhang2022beamfocus}. These methods consider dense, co-located
arrays and protect a \emph{discrete} set of user locations. In
contrast, this paper enforces interference constraints over a
\emph{continuous} ground region using a \emph{sparse} satellite
formation, where the long orbit-to-target focal range induces
near-collinear geometry and the resulting SIR ceiling.

\emph{Optimization techniques.}
Semi-infinite spatial constraints are commonly handled using
cutting-plane SOCP methods~\cite{ref:kelley1960,ref:hettich1993},
which are effective numerically but provide limited insight into the
structure of optimal solutions or achievable SIR limits. Classical
adaptive beamforming offers closed-form alternatives, including MVDR
and LCMV~\cite{ref:capon1969,ref:frost1972}, with robust extensions for
CSI uncertainty~\cite{ref:vorobyov2003}; however, their relationship to
semi-infinite SOCP formulations in near-field beam focusing remains
unclear. Riemannian optimization on the unit torus has become a
standard approach for constant-modulus beam
synthesis~\cite{ref:absil2008,ref:sun2017}, but its performance relative
to unconstrained beamforming in sparse near-field satellite arrays is
not yet well understood.

\emph{Synthesis.}
Existing work addresses satellite precoding, distributed beamforming,
near-field focusing, or optimization techniques in isolation.
Satellite communications primarily target capacity and spectral
efficiency, near-field beam focusing assumes dense ELAAs, and
optimization studies rarely characterize geometry-induced performance
limits. This paper bridges these directions through a unified
Lagrangian-dual framework. Rather than proposing a new optimizer, it
provides an analytical characterization of the common optimum shared by
SOCP, MVDR, LCMV, and constant-modulus optimization, establishing a
closed-form dual-measure representation, a finite-support certificate
for cutting-plane convergence, and a geometry-induced SIR ceiling that
applies to all algorithms.

\subsection{Contributions and Novelty}\label{sec:contribs}
This paper develops a unified Lagrangian-dual framework for
spatially-constrained near-field beam focusing. The main
contributions are as follows.

\begin{enumerate}

\item \textbf{Optimal beamformer structure.} Every optimal linear
beamformer admits a closed-form representation as a generalized
matched filter against an effective spatial covariance induced by
an optimal dual measure over the protection region. This directly
connects semi-infinite SOCP formulations to classical adaptive
beamforming.

\item \textbf{Finite-support dual characterization.} Under mild
continuity and compactness assumptions, the optimal dual measure has
finite support, of cardinality at most $M^2$ in general and at most
$M$ for uniform linear arrays ($M$ the number of array elements). This
yields a finite-dimensional certificate for cutting-plane convergence
and explains the empirical efficiency of sparse active-set methods.

\item \textbf{Fundamental SIR characterization.} A closed-form
upper bound on the mean signal-to-interference ratio (SIR) is
derived in terms of the target steering vector and the
spatial-average steering covariance over the protection region.
Under near-collinear geometry, the bound obeys an asymptotic
logarithmic scaling law in the array size
(Proposition~\ref{prop:asymp}), revealing that interference
suppression is governed by array order rather than aperture span.

\item \textbf{Unified algorithmic interpretation.} MVDR, LCMV, and
SOCP-based beam focusing all arise as special cases of a single
dual-measure formulation rather than as independent designs. The same structure guides a Riemannian
conjugate-gradient algorithm on the unit torus that realizes the
per-element constant-modulus constraint imposed by saturated power
amplifiers.

\item \textbf{Architectural levers and invariances.} Three
operational findings derive from the geometric ceiling: (i)
\emph{algorithm-invariance}, where unconstrained and
constant-modulus beamformers all attain near-optimal SIR;
(ii) \emph{topology-invariance}, where 1D and 2D constellations with
fixed $M$ yield essentially identical SIR in the near-collinear
regime; and (iii) \emph{transmit-receive asymmetry}, where a
receive-side directional antenna at the target adds linearly in
dB to the SIR, whereas transmit-side per-platform aperture
growth saturates against the same near-collinear constraint,
making receive-side directionality the dominant practical lever
for exceeding the ceiling.

\end{enumerate}

The framework provides both theoretical insight and practical
design guidelines: the derived SIR bound offers a principled
criterion for distinguishing algorithmic limitations from
fundamental geometric infeasibility.

\subsection{Notation and Outline}
Bold lowercase/uppercase letters denote vectors/matrices. $\bw^H$, $\bw^*$,
$\bw^T$ denote conjugate transpose, complex conjugate, transpose. $\re\{\cdot\}$,
$\im\{\cdot\}$ extract real and imaginary parts. $\|\cdot\|_2$ is the Euclidean
norm. The symbols $\odot$ and $\oslash$ denote element-wise 
multiplication and division, respectively, and $|\bw|$ denotes the element-wise modulus $[|w_1|,\ldots,|w_M|]^T$. $\sup_{\bq \in \Gset} f(\bq)$ denotes the essential supremum of $f$
over $\Gset$. %(the largest value $f$ attains on $\Gset$, or its limit on a compact set). 
$\Tset = \{z \in \Cset : |z|=1\}$ is the unit complex
circle.

% Section~\ref{sec:model} establishes the system model.
% Section~\ref{sec:dual} develops the dual analysis (Theorems~1--3).
% Section~\ref{sec:algos} derives algorithmic specializations.
% Section~\ref{sec:numerics} reports numerical results.
% Section~\ref{sec:conclusion} concludes.
% Section~\ref{sec:model} establishes the system model and the
% near-collinear regime. Section~\ref{sec:dual} develops the dual
% analysis (Theorems~\ref{thm:structure}--\ref{thm:bound},
% Propositions~\ref{prop:asymp}).
% Section~\ref{sec:algos} derives the four algorithmic specializations
% (MVDR, LCMV, cutting-plane SOCP, Riemannian constant-modulus).
% Section~\ref{sec:numerics} reports numerical results.
% Section~\ref{sec:conclusion} concludes.
Section~\ref{sec:model} formalizes the spatially-constrained
focusing problem and identifies the near-collinear regime that
drives geometric degeneracy. Section~\ref{sec:dual} develops the
Lagrangian-dual analysis, establishing a closed-form structure
for the optimal beamformer (Theorem~\ref{thm:structure}),
finite-support sparsity of the optimal dual measure
(Theorem~\ref{thm:sparsity}), and a closed-form mean-SIR upper
bound with logarithmic asymptotic scaling in the array size
(Theorem~\ref{thm:bound}, Proposition~\ref{prop:asymp}).
Section~\ref{sec:algos} specializes the framework to MVDR, LCMV,
cutting-plane SOCP, and Riemannian constant-modulus design.
Section~\ref{sec:numerics} validates these results on a
representative LEO satellite formation and quantifies the
architectural levers that raise the SIR ceiling.
Section~\ref{sec:conclusion} concludes.

%% ============================================================================
\section{System Model}\label{sec:model}
\begin{figure}[t!]
\centering
\begin{tikzpicture}[font=\rmfamily, scale=0.56, transform shape]

% Background/Ground Plane
\fill[blue!5!white, draw=gray!60, thick] (-7.5, -1.5) -- (2.5, -1.5) -- (7.5, 1.5) -- (-2.5, 1.5) -- cycle;

% Ground Protection Region G_gnd
\draw[dashed, fill=gray!20, opacity=0.8, thick] (0,0) ellipse (3.5 and 1.2);
\node at (-2, -0.6) [font=\large] {$\mathcal{G}_{\text{gnd}}$};
\fill (0,0) circle (2pt) node[below right] {$\mathbf{q}_0$};
\draw[->, >=stealth, thick] (0,0) -- (3.3, -0.4) node[midway, below] {$L$};

% Airborne Protection Region G_air
\fill[blue!20!gray!30, opacity=0.8, draw=black!70, thick] (0, 3) ellipse (4.5 and 1.5);
\fill[white, draw=black!70, thick] (0, 3) ellipse (1.8 and 0.6);
\node at (-3.5, 3.2) [font=\large] {$\mathcal{G}_{\text{air}}$};

% Central vertical axis
\draw[dashed, thick] (0, 0) -- (0, 3);

% Airplane (Target q*)
\begin{scope}[shift={(0,3)}, scale=0.6]
    \filldraw[fill=gray!40, draw=black!80, thick]
        (0, 0.4) -- (0.05, 0.2) -- (0.6, -0.1) -- (0.6, -0.2) -- (0.05, -0.1)
        -- (0, -0.5) -- (0.2, -0.7) -- (0, -0.8) -- (-0.2, -0.7) -- (0, -0.5)
        -- (-0.05, -0.1) -- (-0.6, -0.2) -- (-0.6, -0.1) -- (-0.05, 0.2) -- cycle;
\end{scope}
\node[right] at (0.4, 3) [font=\large] {$\mathbf{q}^*$};

% Satellites Orbit Arc
\draw[thick, domain=-7:7, samples=50] plot (\x, {8 - 0.04*\x*\x});

% Satellite definition
\tikzset{
  sat/.pic={
    \begin{scope}[scale=0.5]
    \draw[fill=blue!50!black, rounded corners=0.5pt] (-1.1, -0.3) rectangle (-0.3, 0.3);
    \draw[fill=blue!50!black, rounded corners=0.5pt] (0.3, -0.3) rectangle (1.1, 0.3);
    \draw[thick] (-0.3, 0) -- (0.3, 0);
    \draw[fill=gray!40, thick] (-0.3, -0.3) rectangle (0.3, 0.3);
    \draw[fill=gray!20, thick] (-0.2, -0.3) -- (-0.4, -0.7) -- (0.4, -0.7) -- (0.2, -0.3) -- cycle;
    \draw[thick, fill=white] (0, -0.5) circle (0.08);
    \end{scope}
  }
}

% Draw Satellites and Beams (Corrected to 8 symmetrical satellites)
% Symmetrical array from x = -7 to x = 7
\foreach \x in {-7, -5, -3, -1, 1, 3, 5, 7} {
    \pgfmathsetmacro{\y}{8 - 0.04*\x*\x}
    \pgfmathsetmacro{\ang}{atan2(3-\y, 0-\x) + 90}
    
    % Sidelobe Leakage (Dashed Gray) - pointing at a region, not a single point
    \draw[dashed, gray!80, thick, ->, >=stealth] (\x, \y-0.4) -- (\x*0.6, 0.2);
    
    % Mainlobe (Solid Red) - shorten to end at q*
    \draw[orange!80!black, thick, ->, >=stealth, shorten >= 6mm] (\x, \y-0.2) -- (0, 3);
    
    % Place satellite pic
    \pic[rotate=\ang] at (\x, \y) {sat};
}

% Dimension D
\draw[<->, >=stealth, thick] (-7, 8.6) -- (7, 8.6) node[midway, above] {$D$};

% Spherical Wavefronts
\draw[gray!60, thick] (-2.8, 5.5) to[out=-20, in=200] (2.8, 5.5);
\draw[gray!60, thick] (-1.8, 4.3) to[out=-20, in=200] (1.8, 4.3);

% Labels on the left
\node[anchor=east] at (-4.5, 5.2) {Spherical wavefronts};
\draw[gray!70, thin] (-4.4, 5.5) -- (-2.8, 5.5);

\node[anchor=east, align=right] at (-4.5, 3) {Airborne protection\\[1mm]mainlobe signal};
\draw[gray!70, thin] (-4.4, 3) -- (-3.8, 3);

\node[anchor=east, align=right] at (-4, 1.2) {Spatial Sidelobe Nulling:\\[1mm]$\mathbf{a}^H(\mathbf{q}_g)\mathbf{w} \approx 0$};

\node[anchor=north, align=center, fill=blue!5!white, fill opacity=0.8, text opacity=1, inner sep=2pt] at (-2.5, -0.2) {$\min |\mathbf{a}^H(\mathbf{q}_g)\mathbf{w}|^2$\\[1mm]for $\mathbf{q}_g \in \mathcal{G}_{\text{gnd}}$};

% Angle O(\eta) < 1 degree
\draw[->, >=stealth, thick] (0.9, 1.5) node[right] {$\mathcal{O}(\eta) < 1^\circ$} to[out=180, in=30] (0.2, 1.4);

% Right Axis (Scale Break)
\draw[thick, <->, >=stealth] (7.4, 0) -- (7.4, 2.8) node[midway, right] {$h_t$};
\draw[thick, <->, >=stealth] (7.4, 3.2) -- (7.4, 8) node[midway, right] {$h_s$};
\draw[thick] (7.3, 2.9) -- (7.6, 3.1);
\draw[thick] (7.3, 2.8) -- (7.6, 3.0);
\node[right] at (5.8, 3) {$h_t \ll h_s$};

% % 3D Coordinate Axis
% \begin{scope}[shift={(-6, -1.5)}]
% \draw[thick, ->, >=stealth] (0,1) -- (1,1) node[right] {$y$};
% \draw[thick, ->, >=stealth] (0,1) -- (0,2) node[above] {$z$};
% \draw[thick, ->, >=stealth] (0,1) -- (-0.6,-0.4) node[left] {$x$};
% \end{scope}

% Invisible node to force extra blank space on the right for user adjustments
\node at (12, 0) {};

\end{tikzpicture}
\caption{System geometry of the coherent satellite formation.}
\label{fig:system}
\end{figure}
This section formalizes the spatially-constrained beam focusing problem.
Section~\ref{sec:model}-A specifies the array geometry and near-field
channel; Section~\ref{sec:model}-B defines the SIR metrics used throughout;
and Section~\ref{sec:model}-C states the reference SOCP formulation that
serves as the starting point for the dual analysis in
Section~\ref{sec:dual}. 

\emph{Concrete scenario.} A canonical instance is a coherent LEO
formation of $M$ satellites at altitude $h_s$ with formation
baseline $D$, jointly transmitting to an airborne target at
altitude $h_t \ll h_s$, while protecting a ground footprint of
radius $L$ from coherent interference. %(e.g., to safeguard co-channel ground stations or comply with radio-quiet zones such as ITU-R~RA.769).

% Each platform may host $N_m$ co-located antennas, but the local
% aperture is much smaller than the inter-platform baseline, so the
% formation behaves as a \emph{sparse coherent array of $M$
% super-elements}. We adopt the $N_m = 1$ idealization throughout:
% joint $MN$-dimensional coherent processing requires
% $\lambda/10$-level synchronization across the entire $MN$ array,
% which is infeasible across the $D \sim 300$~km inter-satellite
% links. Practical formations therefore process within each
% platform first and combine across platforms second; in this
% two-tier architecture $N_m$ enters the achievable SIR only
% through a separable multiplicative factor $\bar G_{\rm eff}$
% (Eq.~\eqref{eq:rho-gen}), so $N_m = 1$ exposes the inter-platform
% geometry directly and the general case is recovered by
% reinstating $\bar G_{\rm eff}$ (Remark~\ref{rem:per-platform}). 

\subsection{Geometry and Channel}
The geometry under consideration is illustrated in Fig.~\ref{fig:system}.
A sparse aperture array consists of $M$ elements at positions
$\bp_m \in \Rset^3$, $m = 1,\dots,M$, with aperture span
$D \triangleq \max_{m,n}\|\bp_m - \bp_n\|$, operating at carrier wavelength $\lambda$.
The target is $\bq^\star \in \Rset^3$, located at altitude $h_t$.

\emph{Two physically distinct protection regions.} Spectrum
coexistence in NTN downlinks gives rise to two qualitatively
different sets of users that must be shielded from coherent
interference, and the dual framework of this paper treats them
uniformly:
\begin{itemize}
%\item $\Gset_{\rm gnd} \subset \Rset^2 \times \{0\}$: a bounded
%ground footprint enclosing terrestrial co-channel users (cellular
%small cells, ground stations, radio-quiet zones).
% \item $\Gset_{\rm air} \subset \Rset^2 \times \{h_t\}$: a sidelobe
% annulus at the target altitude enclosing other airborne users
% (adjacent in-flight terminals, HAPs, UAVs) outside the main-lobe
% disc of radius $r_{\rm ML}$ around $\bq^\star$.
\item $\Gset_{\rm gnd} \subset \Rset^3$: a bounded near-ground
region enclosing terrestrial co-channel users (cellular small
cells, ground stations, rooftop antennas, radio-quiet zones),
with
$\sup_{\bq \in \Gset_{\rm gnd}}\, q_z \;\le\; h_g \ll h_t$,
where $h_g$ is the maximum protected user height.
\item $\Gset_{\rm air} \subset \Rset^3$: a bounded annular region
near the target altitude enclosing other airborne users (adjacent
in-flight terminals, HAPs, UAVs), with
$\sup_{\bq \in \Gset_{\rm air}} |q_z - h_t| \;\le\; \Delta_t$,
where $\Delta_t$ accommodates altitude uncertainty and the user
altitude spread; horizontally, $\Gset_{\rm air}$ lies outside the
main-lobe disc of radius $r_{\rm ML} \triangleq \lambda h_t / D$ around $\bq^\star$.

\end{itemize}
The full protection region is the union
\begin{equation}\label{eq:G-union}
\Gset \;\triangleq\; \Gset_{\rm gnd} \,\cup\, \Gset_{\rm air}
\;\subset\; \Rset^3,
\end{equation}
which is compact whenever its two sub-regions are. All structural
results in Section~\ref{sec:dual} are stated for a generic compact
$\Gset \subset \Rset^3$ and apply to~\eqref{eq:G-union} verbatim.
The near-collinear asymptotic refinement in
Proposition~\ref{prop:asymp} is specific to the ground component
$\Gset_{\rm gnd}$, which dominates the SIR ceiling under the
altitude ratio $\eta = h_t / h_s \ll 1$; the airborne component
$\Gset_{\rm air}$ subtends a much wider angular separation
$\arctan(r_{\rm ML}/h_s)$ from $\bq^\star$ and is therefore
suppressible at substantially lower cost.

With wavenumber $k = 2\pi/\lambda$, the scalar near-field produced by weights $\bw = [w_1, \dots, w_M]^T \in \Cset^M$ at field point $\bq$ is
\begin{equation}\label{eq:field}
E(\bq) = \sum_{m=1}^{M} w_m \cdot \frac{e^{-jk\|\bq - \bp_m\|}}{\|\bq - \bp_m\|} = \ba(\bq)^H \bw,
\end{equation}
where $[\ba(\bq)]_m = e^{+jk r_m(\bq)}/r_m(\bq)$ and $r_m(\bq) = \|\bq - \bp_m\|$.
The weight vector $\bw$ is the design variable in our paper. %When platform $m$ hosts $N_m$ co-located antennas with local steering response $G_m(\bq)$, the same model applies under the substitution $[\ba(\bq)]_m \mapsto \sqrt{G_m(\bq)}\,[\ba(\bq)]_m$, so the dual analysis in Section~\ref{sec:dual} is unchanged. We expose the implications of this substitution in Sections~\ref{sec:degeneracy} and~\ref{sec:numerics}-E.
When platform $m$ hosts $N_m$ co-located antennas, we adopt a
hierarchical model in which each platform first combines its local
elements through a target-matched local weight
$\bw_m^{(\rm loc)} \propto \ba_m(\bq^\star)/\|\ba_m(\bq^\star)\|$, and
the inter-platform optimization is conducted on the resulting
$M$ super-element outputs. This two-tier architecture is the
standard operational regime in demonstrated multi-platform coherent
testbeds~\cite{ref:storek2020testbed,ref:merlano2024swarm} and
emerging long-baseline distributed-array
systems~\cite{ref:formation}. The same channel model
\eqref{eq:field} then applies under the substitution
$[\ba(\bq)]_m \mapsto \sqrt{G_m(\bq)}\,[\ba(\bq)]_m$, where
$G_m(\bq) = |\ba_m(\bq^\star)^H\ba_m(\bq)|^2/\|\ba_m(\bq^\star)\|^2$
denotes the resulting per-platform array factor, and the dual
analysis of Section~\ref{sec:dual} carries through unchanged. Under
this model, $N_m$ enters the achievable SIR only through a separable
multiplicative factor $\bar G_{\rm eff}$ (Eq.~\eqref{eq:rho-gen});
the analysis throughout this paper is therefore presented for
$N_m = 1$ without loss of generality, and the $N_m > 1$ case is
recovered by reinstating $\bar G_{\rm eff}$
(Remark~\ref{rem:per-platform}).

% A natural alternative is to abandon the hierarchical structure and
% jointly optimize all $\sum_m N_m$ physical weights. Since the
% hierarchical model is a structured restriction of this joint design
% space, the joint achievable SIR is no smaller and may be considerably
% larger; correspondingly, the SIR ceiling derived in
% Section~\ref{sec:dual} is an architectural property of the
% hierarchical regime rather than a fundamental physical limit. A
% quantitative comparison is given in
% Sections~\ref{sec:degeneracy} and~\ref{sec:numerics}-E.
The hierarchical model is a structured restriction of this fully
joint design space; the joint optimum is therefore an upper bound
on what the hierarchical architecture can achieve. The SIR
ceiling derived in Section~\ref{sec:dual} thus characterizes the
operationally realizable regime, not the information-theoretic
limit attainable with full $MN$-element coherence. The gap
between the two is quantified in
Sections~\ref{sec:degeneracy} and~\ref{sec:numerics}-E.

\paragraph{Near-collinear regime.}\label{sec:near-collinear}
We refer to the regime in which $\eta \triangleq h_t/h_s$, $D/h_s$,
and $L/h_s$ are all small as the \emph{near-collinear regime}. Its
defining feature is an unavoidable angular degeneracy: viewed from
the array, the sub-nadir projection $\bq_0$ of $\bq^\star$ onto the
ground lies within $\mathcal{O}(\eta) < 1^\circ$ of $\bq^\star$,
with this gap fixed by altitude geometry alone---neither $L$ nor
$D$ can enlarge it. Because $\bq_0$ is the closest ground point to
$\bq^\star$ and hence the dominant leakage direction, any
operationally meaningful $\Gset$ must contain a neighborhood of
$\bq_0$. The steering vectors $\ba(\bq^\star)$ and $\ba(\bq_0)$
are then highly correlated, imposing an SIR ceiling that depends
on $\eta$ alone, independent of $L$ and of the optimization
algorithm. This degeneracy underwrites the asymptotic analysis of
Section~\ref{sec:degeneracy}.

% \subsubsection{Constant-Modulus Hardware Restriction}
% Beyond the analytical and geometric structure above, practical
% transmit architectures impose a hardware-level structural restriction
% on $\bw$. Coherent satellite payloads are severely size-, weight-,
% and power-constrained, and their solid-state power amplifiers are
% typically operated at or near saturation to maximize DC-to-RF
% efficiency. Sustained operation in the linear back-off regime, to realize arbitrary amplitude tapers, wastes DC power and
% degrades effective coverage, outcomes that are seldom acceptable in
% NTN payload design. Phased-array NTN beamforming therefore commonly
% adopts analog or hybrid architectures in which each RF chain is
% restricted to a prescribed amplitude $|w_m| = \sqrt{P_m}$, where
% $P_m \ge 0$ denotes the per-element transmit power budget allocated
% to element $m$, while phase is freely controllable through low-cost
% analog phase shifters.

% This restriction confines $\bw$ to a product of $M$ circles in the
% complex plane, which forms a smooth Riemannian manifold but is
% \emph{non-convex}. Classical closed-form beamformers such as MVDR
% and LCMV no longer yield feasible solutions directly, and
% semidefinite relaxations of the focusing problem typically fail to
% admit rank-one solutions under the modulus constraint. Whether the constant-modulus SIR approaches its unconstrained
% counterpart and if so, under what initialization, is therefore
% not settled by existing array-processing theory, motivating the
% manifold-based treatment developed in Section~\ref{sec:algos}.

\subsubsection{Constant-Modulus Hardware Restriction}

Practical satellite transmitters are constrained by size, weight, and power,
and their solid-state power amplifiers are typically operated near saturation
to maximize power efficiency. Consequently, phased-array NTN architectures
commonly impose constant-modulus beamforming,
$|w_m|=\sqrt{P_m}$, where $P_m$ denotes the transmit power allocated to the
$m$th element, while phases remain freely adjustable.

This constraint restricts $\bw$ to the unit-torus manifold, yielding a
non-convex optimization problem. As a result, classical closed-form
beamformers such as MVDR and LCMV are generally infeasible, while convex
relaxations often fail to recover rank-one solutions. Whether
constant-modulus beamforming can approach the unconstrained performance
therefore remains an important practical question, motivating the
manifold-based algorithm developed in
Section~\ref{sec:algos}.

% %Whether the
% SIR achievable under the constant-modulus restriction approaches the
% unconstrained optimum, and if so under what initialization, is
% therefore not immediate from existing array-processing theory,
% motivating the manifold-based treatment developed in
%Section~\ref{sec:algos}.

\subsection{Performance Metrics}\label{sec:metrics}
% We use mean and peak SIR:
% \begin{align}
% \text{SIR}_{\text{mean}}(\bw) &\triangleq \frac{|\ba(\bq^\star)^H\bw|^2}{|\Gset|^{-1}\!\int_\Gset |\ba(\bq)^H\bw|^2 d\bq}, \\
% \text{SIR}_{\text{peak}}(\bw) &\triangleq \frac{|\ba(\bq^\star)^H\bw|^2}{\sup_{\bq \in \Gset}|\ba(\bq)^H\bw|^2}.
% \end{align}
We use two signal-to-interference ratios, each
aggregating the leakage profile $|\ba(\bq)^H\bw|^2$ over $\Gset$
differently:
\begin{align}
\text{SIR}_{\text{mean}}(\bw) &\triangleq \frac{|\ba(\bq^\star)^H\bw|^2}{|\Gset|^{-1}\!\int_\Gset |\ba(\bq)^H\bw|^2 \, d\bq}, \label{eq:sir-mean}\\
\text{SIR}_{\text{peak}}(\bw) &\triangleq \frac{|\ba(\bq^\star)^H\bw|^2}{\sup_{\bq \in \Gset}|\ba(\bq)^H\bw|^2}. \label{eq:sir-peak}
\end{align}
The mean captures average leakage (relevant for spectrum coexistence);
the peak captures worst-case leakage (relevant for protected
receivers). We use $\text{SIR}_{\text{mean}}$ as the design objective
because its quadratic denominator admits a Rayleigh-quotient analysis
(Theorem~\ref{thm:bound}); $\text{SIR}_{\text{peak}}$ is reported as a
diagnostic.

\subsection{Reference SOCP Formulation}
A natural design problem balances three objectives: maximize the
coherent target gain, suppress field strength uniformly over the
protection set $\Gset$, and respect a total transmit power budget
$P_{\text{tot}}$. We pose this as:
\begin{equation}\label{eq:socp}
\begin{aligned}
\max_{\bw,\, t_g \ge 0}\quad &\re\{\ba_t^H \bw\} - \beta\, t_g \\
\text{s.t.}\quad &\im\{\ba_t^H\bw\} = 0, \\
& |\ba(\bq)^H\bw| \le t_g, \quad \forall \bq \in \Gset, \\
& \|\bw\|_2^2 \le P_{\text{tot}},
\end{aligned}
\end{equation}
where $\ba_t \triangleq \ba(\bq^\star)$ is the target steering vector, and $\beta \ge 0$ trades target gain against suppression. The auxiliary
variable $t_g$ epigraphs the worst-case field magnitude on $\Gset$, so
the second constraint is equivalent to
$t_g \ge \sup_{\bq \in \Gset}|\ba(\bq)^H\bw|$. 
The equality $\im\{\ba_t^H\bw\} = 0$ removes the global $U(1)$
phase ambiguity intrinsic to the SIR objective: since SIR depends
on $\bw$ only through squared moduli, the rotation
$\bw \mapsto e^{j\phi}\bw$ leaves all SIR values invariant, and the
phase anchor merely fixes a representative within each equivalence
class. Under this choice, $|\ba_t^H\bw| = \re\{\ba_t^H\bw\}$ and
the SOCP template applies without sacrificing optimality.

Problem~\eqref{eq:socp} is a second-order cone program (SOCP) with
infinitely many constraints (one per $\bq \in \Gset$), making it
semi-infinite. \emph{Inter-sample leakage.} Optimizing~\eqref{eq:socp} over a finite
sample set $\{\bq_k\}_{k=1}^K \subset \Gset$ does not in general
ensure constraint satisfaction at points
$\bq \in \Gset \setminus \{\bq_k\}$: the beamformer may exhibit
\emph{inter-sample leakage}, with the field exceeding the prescribed
threshold $\tau$ between sample locations, yielding residual
interference in the actual continuous field even when the
sampled-point program is feasible. The cutting-plane scheme of
Section~\ref{sec:socp} addresses this by adaptively augmenting the
working set with worst-case violators until the residual leakage
falls below a prescribed tolerance. It serves as our \emph{reference} formulation in two
senses: (i) it is what one would solve numerically with off-the-shelf
convex solvers after sampling $\Gset$ (Algorithm~\ref{alg:socp} in
Section~\ref{sec:socp}); (ii) it is also the starting point for our
theoretical analysis. To expose the underlying structure, we work
in Section~\ref{sec:dual} with the equivalent \emph{canonical form}
\begin{equation}\label{eq:canonical}
\max_{\bw}\;\re\{\ba_t^H\bw\}\ \ \text{s.t.}\ \
|\ba(\bq)^H\bw|^2 \le \tau\ \forall\,\bq\in\Gset,\ \
\|\bw\|_2^2 \le P_{\text{tot}},
\end{equation}
in which the soft penalty $\beta\,t_g$ in~\eqref{eq:socp} is
replaced by a hard threshold $\tau$ on the squared field magnitude.
The replacement is not a pointwise identity but an \emph{equivalent
reformulation} of the same target-gain versus ground-leakage
trade-off: \eqref{eq:socp} and~\eqref{eq:canonical} are the
weighted-sum and $\varepsilon$-constraint scalarizations,
respectively, of a common bi-objective program, and standard convex
multi-objective theory guarantees that they trace out the same
Pareto frontier. Concretely, for every threshold $\tau > 0$
in~\eqref{eq:canonical} there exists a penalty $\beta(\tau) \ge 0$
in~\eqref{eq:socp} such that both problems admit a common optimizer
$\bw^\star$, with the correspondence $\tau \leftrightarrow \beta$
realized through the optimal Lagrange variable of the leakage
constraint; the inverse direction holds analogously.

We adopt the hard form~\eqref{eq:canonical} for the dual analysis
because its per-point constraint $|\ba(\bq)^H\bw|^2 \le \tau$ admits a
single non-negative measure $\mu$ on $\Gset$ as dual variable,
exposing the optimal-solution structure of Section~\ref{sec:dual}
cleanly; the soft form retains the auxiliary scalar $t_g$ that is
convenient for off-the-shelf solvers but obscures this geometry.

%% ============================================================================
\section{Lagrangian Dual Analysis}\label{sec:dual}

%This section develops the theoretical results. 
This section establishes three theoretical 
pillars of our framework: (i) the structure of the optimal beamformer 
and the finite-support sparsity of its dual measure 
(Theorems~\ref{thm:structure}--\ref{thm:sparsity}), which together 
underwrite the cutting-plane convergence theory developed in 
Section~\ref{sec:socp}; 
(ii) a closed-form mean-SIR upper bound together with its 
near-collinear asymptotic refinement (Theorem~\ref{thm:bound}, 
Proposition~\ref{prop:asymp}); and 
(iii) a robustness budget quantifying SIR degradation under position 
perturbations (Remark~\ref{rem:perturbation}).
% For analytical tractability, we work with the canonical form of~\eqref{eq:socp}:
% \begin{equation}\label{eq:canonical}
% \max_{\bw}\,\re\{\ba_t^H\bw\}\ \text{s.t.}\ |\ba(\bq)^H\bw|^2 \le \tau\ \forall\bq\in\Gset,\ \|\bw\|_2^2 \le P_{\text{tot}}.
% \end{equation}
We first establish the precise constraint qualification needed for
strong duality. For brevity let
\begin{equation}\label{eq:Reff-def}
\Reff(\mu, \nu) \triangleq \int_\Gset \ba(\bq)\ba(\bq)^H\, d\mu(\bq) + \nu\bI,
\end{equation}
which collects the dual variables $(\mu, \nu)$ into a single Hermitian operator.

\emph{Interpretation of $\mu$ and $\nu$.} The pair $(\mu, \nu)$ are
the Lagrange multipliers attached to constraints
of~\eqref{eq:canonical}. The non-negative measure $\mu$ on $\Gset$ is
paired with the continuum of ground-leakage constraints
$|\ba(\bq)^H\bw|^2 \le \tau$, while the non-negative scalar
$\nu \ge 0$ is paired with the power-budget constraint
$\|\bw\|_2^2 \le P_{\text{tot}}$. By complementary slackness,
$\mathrm{supp}(\mu^\star)$ collects the spatial locations at which the
leakage constraint is active at the optimum, and
Theorem~\ref{thm:sparsity} below shows that this support is in fact
finite, with cardinality at most $M^2$ in general (and at most $M$ for
uniform linear arrays). The optimal $\mu^\star$
therefore identifies a small set of \emph{active worst-case ground
points}, a sampling structure that emerges from optimality rather
than being prescribed in advance, while $\nu^\star$ plays the role of
the shadow price of transmit power. The operator $\Reff(\mu, \nu)$ in \eqref{eq:Reff-def} accordingly combines the active spatial directions and the power budget into a single effective covariance matrix. In array signal processing field, $\int_\Gset \ba(\bq)\ba(\bq)^H\, d\mu(\bq)$ acts as the spatial interference covariance by the worst-case leakage points, while the power-constraint multiplier $\nu\bI$ plays the role of an endogenous \emph{diagonal loading} factor.

\subsection{Preliminaries}\label{sec:prelim}

\subsubsection{Measure-Theoretic Setup}
The continuum of leakage constraints in~\eqref{eq:canonical} forces
the corresponding Lagrange multiplier $\mu$ to live in a
function-valued space rather than $\Rset^N$. We model $\mu$ as a
finite non-negative Borel measure on $\Gset$, write $\mathcal{M}_+(\Gset)$
for the cone of all such measures, and equip this cone with the
weak-$*$ topology. In this topology, a sequence
$\mu_n \xrightarrow{w^*} \mu$ if and only if
$\int_\Gset f\, d\mu_n \to \int_\Gset f\, d\mu$ for every continuous
$f : \Gset \to \Rset$. This is the natural notion of convergence when
the limiting object is a discrete Dirac mass while its approximants
are diffuse, which is precisely the situation arising in
Theorem~\ref{thm:sparsity}.

Two technical facts follow and will be used without further comment
in the proofs below: the closed unit ball in $\mathcal{M}_+(\Gset)$
is weak-$*$ compact (Banach--Alaoglu theorem), and the dual function
$g(\mu, \nu)$ is weak-$*$ lower semi-continuous. Together with the
boundedness of sublevel sets established in
Theorem~\ref{thm:structure}, these properties ensure that the
infimum defining the dual optimum $d^\star$ is attained. Readers
unfamiliar with measure theory may treat $\mu$ operationally as a
non-negative weight assigned to ground points, which under the
optimum reduces to a finite combination of point masses on at most
$M^2$ atoms (Theorem~\ref{thm:sparsity}).

\subsubsection{Constraint Qualification}
Problem~\eqref{eq:canonical} satisfies the \emph{Slater-Robinson constraint
qualification}~\cite{ref:shapiro2009} if there exists $\bw_0 \in \Cset^M$ with
\begin{equation}\label{eq:slater}
|\ba(\bq)^H\bw_0|^2 < \tau\ \forall \bq \in \Gset, \quad \|\bw_0\|^2 < P_{\text{tot}}.
\end{equation}
This holds trivially with $\bw_0 = \mathbf{0}$ for $\tau, P_{\text{tot}} > 0$.

\subsection{Optimal Solution Structure}

\begin{theorem}[Optimal Linear Beamformer Structure]\label{thm:structure}
Suppose $\Gset$ is compact, $\ba : \Gset \to \Cset^M$ is continuous, and~\eqref{eq:slater}
holds. Then~\eqref{eq:canonical} admits a unique primal optimum $\bw^\star$,
dual optima $(\mu^\star, \nu^\star)$ exist, and
\begin{equation}\label{eq:wstar}
\bw^\star = \tfrac{1}{2}\Reff(\mu^\star, \nu^\star)^{-1}\, \ba_t.
\end{equation}
\end{theorem}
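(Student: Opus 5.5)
The plan is to treat \eqref{eq:canonical} as a convex semi-infinite program and apply the Lagrangian duality theory for such programs under the Slater--Robinson condition \eqref{eq:slater}~\cite{ref:shapiro2009}. The optimality conditions then give \eqref{eq:wstar}.

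\emph{Step 1 (primal existence and convexity).} The objective $\re\{\ba_t^H\bw\}$ is real-linear in $\bw$. Each constraint $\bw\mapsto|\ba(\bq)^H\bw|^2$ is a convex quadratic, and so is $\|\bw\|_2^2$. The feasible set is therefore an intersection of closed convex sets inside the ball $\|\bw\|_2^2\le P_{\text{tot}}$. Closedness holds because an arbitrary intersection of closed sets is closed, so the feasible set is compact and convex. It is also nonempty, since it contains $\mathbf 0$. A maximizer therefore exists by Weierstrass.

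\emph{Step 2 (strong duality and dual attainment).} Pair the constraint family with $\mu\in\mathcal M_+(\Gset)$ and the power constraint with $\nu\ge0$. The Lagrangian is
\begin{equation*}
L(\bw,\mu,\nu)=\re\{\ba_t^H\bw\}-\bw^H\Reff(\mu,\nu)\bw+\tau\,\mu(\Gset)+\nu P_{\text{tot}}.
\end{equation*}
Let $g(\mu,\nu)=\sup_{\bw}L(\bw,\mu,\nu)$ be the dual function. Continuity of $\ba$ on the compact $\Gset$ makes the constraint map continuous into $C(\Gset)$, whose dual is the space of Borel measures. Condition \eqref{eq:slater}, with $\bw_0=\mathbf 0$, is then exactly the Robinson/Slater condition for the cone of nonnegative functions in $C(\Gset)$. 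Standard conic duality then gives zero duality gap and existence of a dual minimizer $(\mu^\star,\nu^\star)$. As a cross-check, Slater also bounds the dual sublevel sets: evaluating $L$ at $\bw_0=\mathbf 0$ gives $g\ge\tau\mu(\Gset)+\nu P_{\text{tot}}$, so bounded $g$ bounds the total mass of $\mu$ and the size of $\nu$. Banach--Alaoglu together with weak-$*$ lower semicontinuity of $g$ then gives attainment directly.

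\emph{Step 3 (stationarity and the closed form).} By strong duality, any primal optimum $\bw^\star$ maximizes $L(\cdot,\mu^\star,\nu^\star)$, and complementary slackness holds. The Lagrangian is a concave quadratic in $\bw$. Setting its Wirtinger derivative with respect to $\bw^*$ to zero gives $\tfrac12\ba_t=\Reff(\mu^\star,\nu^\star)\bw^\star$. When $\Reff$ is invertible this is \eqref{eq:wstar}, and it also shows that the maximizer of $L$, and hence $\bw^\star$, is unique.

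\emph{Main obstacle: invertibility of $\Reff$ and uniqueness.} If $\nu^\star>0$ then $\Reff\succeq\nu^\star\bI\succ0$ and the argument closes. The delicate case is $\nu^\star=0$, i.e., the power constraint is inactive and $\int\ba\ba^H d\mu^\star$ might be singular. I would first try to rule this out by contradiction. Suppose $\Reff$ is singular and take $\bv\in\ker\Reff$. Testing the identity $\tfrac12\ba_t=\Reff\bw^\star$ against $\bv$ shows $\ba_t^H\bv=0$, so $\ba_t$ lies in the range of $\Reff$. Also $\ba(\bq)^H\bv=0$ $\mu^\star$-a.e., so moving along $\bv$ keeps every active leakage constraint unchanged on $\mathrm{supp}\,\mu^\star$. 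I would then exploit this together with strict convexity of $\|\cdot\|_2^2$. Suppose there were two optima $\bw_1\neq\bw_2$. Their midpoint is optimal with the same objective value, and it satisfies $\|\bw\|_2^2<P_{\text{tot}}$ whenever $\|\bw_1\|_2^2=\|\bw_2\|_2^2=P_{\text{tot}}$, by strict convexity of the squared norm. I would then argue that the midpoint is strictly feasible on the leakage constraints wherever $\ba(\bq)^H\bw_1\ne\ba(\bq)^H\bw_2$. A small step along the $\ba_t$ direction would then increase the objective, contradicting optimality. This perturbation argument is where generic non-degeneracy of the steering vectors (e.g., $\ba_t\notin\mathrm{span}\{\ba(\bq):\bq\in\Gset\}$) may have to be invoked. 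If it fails in full generality, I would fall back on selecting the minimum-norm optimum, which restores uniqueness, and on interpreting $\Reff^{-1}$ as the pseudoinverse on the range containing $\ba_t$.
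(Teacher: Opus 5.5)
Your Steps 1--3 follow the paper's own route: compactness for primal existence, Slater $\Rightarrow$ strong duality with dual attainment by weak-$*$ compactness, and Wirtinger stationarity of $\mathcal{L}$. They correctly give $\Reff(\mu^\star,\nu^\star)\,\bw^\star=\tfrac12\ba_t$ for every primal and every dual optimum, and hence \eqref{eq:wstar} and uniqueness whenever $\nu^\star>0$.

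\textbf{The gap.} It is exactly the case you flag, and your perturbation argument does not close it. Among points active for both optima, the midpoint becomes strictly feasible only where $\ba(\bq)^H\bw_1\neq\ba(\bq)^H\bw_2$. Where the two active values coincide, the midpoint stays active, and a step along $\ba_t$ can violate the constraint immediately.

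\textbf{Counterexample.} In fact nothing can close this case, because the statement fails there; the hypotheses constrain $\ba$ only through continuity. Take $M=3$ and $\Gset=\{\bq_1,\bq_2,\bq_3\}$, with $\ba(\bq_1)=\mathbf{e}_1$, $\ba(\bq_2)=\mathbf{e}_2$, $\ba(\bq_3)=(1,1,1)^T$, $\ba_t=(1,1,0)^T$, $\tau=1$ and $P_{\text{tot}}=100$. All hypotheses hold, including Slater with $\bw_0=\mathbf{0}$. Since $\re\{w_1+w_2\}\le|w_1|+|w_2|\le 2$, the optimal set is $\{(1,1,w_3)^T : |w_3+2|\le 1\}$. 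This is a whole disc, lying strictly inside the power ball. Complementary slackness against the optimum $(1,1,-2)^T$ forces $\nu^\star=0$ and $\mu^\star(\{\bq_3\})=0$ for every dual optimum. Stationarity then gives the unique dual optimum $\mu^\star=\tfrac12(\delta_{\bq_1}+\delta_{\bq_2})$, so $\Reff^\star=\mathrm{diag}(\tfrac12,\tfrac12,0)$ is singular. Here $\bq_1,\bq_2$ are active at every optimum, which is precisely where your midpoint step breaks. Your fallback fails as well: $\tfrac12(\Reff^\star)^{\dagger}\ba_t=(1,1,0)^T$ is infeasible, since it gives $|\ba(\bq_3)^H\bw|^2=4>\tau$. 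The minimum-norm optimum is $(1,1,-1)^T$, not the pseudoinverse point.

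\textbf{The paper's sketch has the same hole.} Its closing sentence, that strict concavity yields uniqueness and $\Reff^\star\succ0$, is circular. Strict concavity of $\mathcal{L}(\cdot,\mu^\star,\nu^\star)$ was only available under $\Reff^\star\succ0$. What holds in general is only that every optimum lies in $\tfrac12(\Reff^\star)^{\dagger}\ba_t+\ker\Reff^\star$, intersected with the feasible set.

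\textbf{A correct repair.} Add a hypothesis that makes the power budget genuinely bind. If deleting $\|\bw\|_2^2\le P_{\text{tot}}$ strictly increases the optimal value, then every dual optimum has $\nu^\star>0$. Indeed, a dual optimum with $\nu^\star=0$ would give $d^\star=g(\mu^\star,0)$, which by weak duality upper-bounds the relaxed problem, a contradiction. Then $\Reff^\star\succeq\nu^\star\bI\succ0$, and your Step 3 delivers both \eqref{eq:wstar} and uniqueness. Your condition $\ba_t\notin\mathrm{span}\{\ba(\bq):\bq\in\Gset\}$ is a special case, since the relaxed problem is then unbounded. However, it contradicts the spanning hypothesis of Theorem~\ref{thm:bound}, which is the situation the paper works in ($\Rg\succ0$). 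So it does not rescue the theorem in the regime where the paper applies it.
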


\begin{IEEEproof}[Sketch]
Existence of $\bw^\star$ follows from the compact feasible set and
continuous objective. The Lagrangian
\begin{equation}\label{eq:Lagr-quad}
\mathcal{L}(\bw,\mu,\nu) = \re\{\ba_t^H\bw\} - \bw^H \Reff(\mu,\nu)\bw
+ \tau\mu(\Gset) + \nu P_{\text{tot}}
\end{equation}
is strictly concave whenever $\Reff \succ 0$, so Wirtinger
stationarity~\cite{ref:hjorungnes2007} gives the unique maximizer
$\bw^\star(\mu,\nu) = \tfrac{1}{2}\Reff^{-1}\ba_t$ and dual function
$g(\mu,\nu) = \tfrac{1}{4}\ba_t^H\Reff^{-1}\ba_t
+ \tau\mu(\Gset) + \nu P_{\rm tot}$. The Slater
condition~\eqref{eq:slater} yields strong
duality~\cite[Thm.~3.6]{ref:shapiro2009}, and a dual optimum
$(\mu^\star,\nu^\star)$ exists by standard weak-$*$ compactness
arguments in $\mathcal{M}_+(\Gset)$.
Strict concavity then gives uniqueness of $\bw^\star$ and
$\Reff^\star \succ 0$.
\end{IEEEproof}
% % \textbf{Step 5 (existence of dual optimum).} $g$ is convex and weak-$*$ lower
% % semi-continuous on $\mathcal{M}_+(\Gset)$~\cite[Lem.~2.1]{ref:shapiro2009};
% % sublevel sets are bounded by $g \ge \tau\mu(\Gset) + \nu P_{\text{tot}}$;
% % the unit ball is weak-$*$ compact (Banach-Alaoglu). Hence $(\mu^\star, \nu^\star)$
% % exists.
% \textbf{Step 5 (existence of dual optimum).} The leakage threshold
% $\tau$ and power budget $P_{\text{tot}}$ are fixed problem data, so the
% dual function $g(\mu, \nu)$ depends only on the dual variables. The
% function $g$ is convex and weak-$*$ lower semi-continuous on
% $\mathcal{M}_+(\Gset) \times \Rset_{\ge 0}$
% \cite[Lem.~2.1]{ref:shapiro2009}, and its sublevel sets are bounded
% because $g(\mu, \nu) \ge \tau\,\mu(\Gset) + \nu\, P_{\text{tot}}$ with
% $\tau, P_{\text{tot}} > 0$. Combined with weak-$*$ compactness of the
% unit ball (Banach-Alaoglu), the infimum defining $d^\star$ is attained,
% yielding $(\mu^\star, \nu^\star)$.

% \textbf{Step 6 (uniqueness and invertibility).} $\bw^\star \neq \mathbf{0}$
% when $\ba_t \neq \mathbf{0}$, hence $\Reff^\star \succ 0$ by~\eqref{eq:wstar}.
% Strict concavity of $\mathcal{L}(\cdot,\mu^\star,\nu^\star)$ implies
% uniqueness of $\bw^\star$.
% \end{IEEEproof}

\begin{remark}[Geometric Interpretation]
Theorem~\ref{thm:structure} parameterizes all optimal linear beamformers by
the dual measure $\mu^\star$. Different algorithms correspond to different
choices of $\mu^\star$.
\end{remark}

%\subsection{Support Sparsity}
\subsection{Support Sparsity}

\begin{theorem}[Sparsity of Optimal Dual Measure]\label{thm:sparsity}
Under the hypotheses of Theorem~\ref{thm:structure}, there exists an
optimal dual measure $\mu^\star$ for~\eqref{eq:canonical} that is
purely atomic with finite support:
\begin{equation}\label{eq:atomic}
\mu^\star = \sum_{i=1}^{K} c_i\, \delta_{\bq_i^\star},
\quad c_i > 0,\ \bq_i^\star \in \Gset^\star,\ K \le M^2,
\end{equation}
where $K$ denotes the cardinality of the atomic support of $\mu^\star$,
$\delta_{\bq}$ is the Dirac measure at $\bq$, and $\Gset^\star$
is the active set defined below. For uniform linear arrays (ULA) the
bound sharpens to $K \le M$ via Carath\'eodory-Toeplitz; this sharper
bound is observed empirically in our (non-ULA) satellite geometry but
not used in any theoretical statement of this paper.
\end{theorem}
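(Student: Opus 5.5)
The plan is a Carathéodory reduction on the matrix moment of the dual measure. The point is that, at fixed $\nu^\star$, the dual function $g(\mu,\nu^\star)=\tfrac14\ba_t^H\Reff(\mu,\nu^\star)^{-1}\ba_t+\tau\mu(\Gset)+\nu^\star P_{\rm tot}$ depends on $\mu$ only through two quantities: the Hermitian moment matrix $\bM(\mu)\triangleq\int_\Gset\ba(\bq)\ba(\bq)^H d\mu(\bq)$ and the total mass $\mu(\Gset)$.

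\textbf{Step 1 (the active set).} Start from a dual optimum $(\mu^\star,\nu^\star)$ given by Theorem~\ref{thm:structure}. Define the active set
\[
\Gset^\star\triangleq\{\bq\in\Gset:\ |\ba(\bq)^H\bw^\star|^2=\tau\}.
\]
By continuity of $\ba$ this set is compact. Complementary slackness, $\int_\Gset(\tau-|\ba(\bq)^H\bw^\star|^2)\,d\mu^\star=0$ with a nonnegative integrand, then forces $\mathrm{supp}(\mu^\star)\subseteq\Gset^\star$.

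\textbf{Step 2 (matching the moment with finitely many atoms).} The Hermitian $M\times M$ matrices form a real vector space of dimension $M^2$. The matrix $\bM(\mu^\star)$ lies in the convex cone generated by the compact set $\{\ba(\bq)\ba(\bq)^H:\bq\in\Gset^\star\}$. The cone is needed because $\mu^\star$ is not normalized.

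Write $\mu^\star=\mu^\star(\Gset)\,\pi$ with $\pi$ a probability measure on $\Gset^\star$. Then $\bM(\pi)$ lies in the closed convex hull of that compact set, and the hull of a compact set in finite dimension is already closed. Carathéodory's theorem for cones expresses $\bM(\mu^\star)$ as $\sum_{i=1}^K c_i\ba(\bq_i^\star)\ba(\bq_i^\star)^H$ with $c_i>0$, $\bq_i^\star\in\Gset^\star$, and $K\le M^2$.

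\textbf{Step 3 (preserving the total mass).} The atomic measure $\tilde\mu=\sum_i c_i\delta_{\bq_i^\star}$ has the same $\Reff$, so the first term of $g$ is unchanged. The obstacle is the term $\tau\mu(\Gset)$: a naive Carathéodory bound applied to the pair $(\bM,\mu(\Gset))$ would give $M^2+1$ atoms.

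I would avoid the extra atom using the active-set identity. For every $\bq\in\Gset^\star$,
\[
\tr\big(\ba(\bq)\ba(\bq)^H\bw^\star\bw^{\star H}\big)=\tau .
\]
Hence mass is a linear functional of the moment on the cone over $\Gset^\star$: $\mu(\Gset)=\tau^{-1}\tr(\bM(\mu)\bw^\star\bw^{\star H})$ for every $\mu$ supported on $\Gset^\star$. So matching $\bM$ automatically matches $\tilde\mu(\Gset)=\mu^\star(\Gset)$, and $g(\tilde\mu,\nu^\star)=g(\mu^\star,\nu^\star)=d^\star$. Therefore $(\tilde\mu,\nu^\star)$ is dual optimal, and by \eqref{eq:wstar} it reproduces the same $\bw^\star$.

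\textbf{Step 4 (the ULA case).} For a ULA I would argue heuristically rather than rigorously. In the far-field or paraxial idealization, $\ba\ba^H$ is Toeplitz and is parametrized by $2M-1$ real numbers. Carathéodory–Toeplitz (Vandermonde) decomposition of the resulting PSD Toeplitz moment then yields at most $M$ atoms. This is consistent with the statement's remark that this refinement is not used elsewhere in the paper.

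The main obstacle is Step 3: keeping the bound at $M^2$ rather than $M^2+1$, together with the closedness and compactness bookkeeping needed for $\Gset^\star$.
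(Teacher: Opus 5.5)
Your proposal is correct and follows essentially the paper's route. Complementary slackness confines $\mathrm{supp}(\mu^\star)$ to the compact active set $\Gset^\star$, and a Carath\'eodory-type reduction of the Hermitian moment matrix in the $M^2$-dimensional real space gives $K\le M^2$; the paper cites Tchakaloff's theorem, which for compactly supported measures amounts to your conic Carath\'eodory argument, and its ULA refinement is no more rigorous than your heuristic Step~4. Your Step~3 closes a point the paper's sketch skips: the paper asserts that $\mu$ enters the dual only through $\bM(\mu)$ and never addresses the $\tau\mu(\Gset)$ term, whereas your identity $\mu(\Gset)=\tau^{-1}\bw^{\star H}\bM(\mu)\bw^\star$ for measures supported on $\Gset^\star$ is exactly what justifies $g(\mu_{\rm atom})=g(\mu^\star)$ with only $M^2$ atoms.
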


The proof rests on three observations.
(i) Complementary slackness restricts $\mathrm{supp}(\mu^\star)$ to the
active set $\Gset^\star$.
(ii) The dual value depends on $\mu$ only through the Hermitian moment
matrix $\bM(\mu) \triangleq \int \ba(\bq)\ba(\bq)^H\,d\mu(\bq)$, so it
suffices to find an atomic measure on $\Gset^\star$ with the same moment.
(iii) Tchakaloff's theorem~\cite{ref:tchakaloff,ref:bayer-teichmann},
applied to the $M^2$ real-linear functionals encoded in $\bM$, yields
the desired finite-support representation.
Note that the atomic representation in~\eqref{eq:atomic} is an
\emph{a posteriori} characterization of $\mu^\star$, not an
\emph{a priori} sampling rule. The active points
$\{\bq_i^\star\}_{i=1}^K$ are themselves part of the optimization
output, coinciding with the worst-case leakage directions at the
optimum; an arbitrary pre-selected discretization of $\Gset$ generally
does not contain them.
\begin{IEEEproof}[Sketch]
Complementary slackness~\cite[Thm.~5.91]{ref:bonnans2000} confines
$\mathrm{supp}(\mu^\star)$ to the compact active set
$\Gset^\star \triangleq \{\bq \in \Gset : |\ba(\bq)^H\bw^\star|^2 = \tau\}$.
Since $\mu$ enters the dual only through the moment matrix
$\bM(\mu) = \int_{\Gset^\star}\ba(\bq)\ba(\bq)^H\,d\mu(\bq)$
(via $\Reff = \nu\bI + \bM$), fixed by $M^2$ real-linear functionals,
Tchakaloff's theorem~\cite[Thm.~2]{ref:bayer-teichmann} yields an
atomic measure $\mu_{\rm atom} = \sum_{i=1}^K c_i\delta_{\bq_i}$,
$K \le M^2$, on $\Gset^\star$ reproducing $\bM(\mu^\star)$ exactly;
hence $g(\mu_{\rm atom}) = g(\mu^\star)$, so $\mu_{\rm atom}$ is an
optimal atomic dual measure.
\end{IEEEproof}

\begin{remark}[Algorithmic role]\label{rem:sparsity-use}
The practical role of Theorem~\ref{thm:sparsity} is to certify
\emph{algorithmic finiteness}: any procedure that adaptively
identifies active leakage points, such as the cutting-plane scheme of
Section~\ref{sec:socp}, is guaranteed to terminate with a working set
of cardinality at most $M$ (or $M^2$ in the general non-ULA case),
since this is the maximum support size compatible with optimality.
The atomic representation thus underwrites the convergence theory of
Proposition~\ref{prop:cp-conv} rather than serving as a standalone
sampling prescription.
\end{remark}
% \begin{remark}[Tightness and special geometries]\label{rem:sparsity-tight}
% The Tchakaloff bound $K \le M^2$ is the sharpest available without
% further structure on $\ba(\cdot)$. For a uniform linear array, where
% $\ba(\bq)$ is Vandermonde and $\bM(\mu)$ is positive semi-definite
% Toeplitz, Carath\'eodory-Toeplitz~\cite{ref:carath-toeplitz} sharpens
% the bound to $K \le M$. The satellite geometry in
% Section~\ref{sec:numerics} is non-ULA, but empirically the active sets
% generated by the cutting-plane algorithm satisfy
% $|\Gset_k| \le M$ in every operating point we tested, suggesting the
% true atomic complexity is far below the worst-case $M^2$ bound for the
% near-collinear regime.
% \end{remark}
\begin{remark}[Tightness and special geometries]\label{rem:sparsity-tight} The Tchakaloff bound $K \le M^2$ is the sharpest available without further structure on $\ba(\cdot)$. For a uniform linear array (ULA), where $\ba(\bq)$ is Vandermonde and $\bM(\mu)$ is positive semi-definite Toeplitz, the Carath\'eodory-Toeplitz theorem~\cite{ref:carath-toeplitz} strictly sharpens this to $K \le M$. The satellite constellation simulated in Section~\ref{sec:numerics} is non-ULA, yet our cutting-plane algorithm empirically satisfies $|\Gset_k| \le M$ across all tested operating points. This occurs because the orbital arc is approximately linear with only small geometric perturbations from a strict ULA, allowing the near-ULA regime to robustly inherit the Toeplitz sharpening. We do not claim $K \le M$ holds for arbitrary non-ULA geometries; the rigorous $K \le M^2$ bound remains the sole finite-dimensional certificate underwriting the theoretical statements of this paper. \end{remark}

\subsection{Mean-SIR Upper Bound}

\begin{theorem}[Mean-SIR Upper Bound]\label{thm:bound}
Suppose $\Gset$ is compact and $\{\ba(\bq) : \bq \in \Gset\}$ spans $\Cset^M$.
Then $\Rg \succ 0$, and for all $\bw \in \Cset^M\setminus\{\mathbf{0}\}$,
\begin{equation}\label{eq:bound-strict}
\frac{|\ba_t^H\bw|^2}{\bw^H \Rg \bw} \le \rho^\star \triangleq \ba_t^H \Rg^{-1} \ba_t,
\end{equation}
with equality iff $\bw \propto \Rg^{-1}\ba_t$.
\end{theorem}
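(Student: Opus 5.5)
We take $\Rg$ to be the spatial-average steering covariance $\Rg \triangleq |\Gset|^{-1}\int_\Gset \ba(\bq)\ba(\bq)^H\,d\bq$. With this definition, $\bw^H\Rg\bw$ is exactly the denominator of $\text{SIR}_{\text{mean}}$ in~\eqref{eq:sir-mean}. The proof has two steps: establish $\Rg\succ 0$, then apply a whitened Cauchy--Schwarz argument.

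\textbf{Step 1 (positive definiteness).} For any $\bw$,
\begin{equation*}
\bw^H\Rg\bw = |\Gset|^{-1}\int_\Gset |\ba(\bq)^H\bw|^2\,d\bq \ge 0,
\end{equation*}
so $\Rg \succeq 0$. Now suppose $\bw^H\Rg\bw = 0$. Then $|\ba(\bq)^H\bw|^2 = 0$ for Lebesgue-a.e.\ $\bq\in\Gset$. Since $\ba$ is continuous (the near-field channel~\eqref{eq:field} is smooth away from the element positions), this extends to every point of the support of Lebesgue measure restricted to $\Gset$.

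This is the delicate step. The spanning hypothesis has to be read as spanning by $\{\ba(\bq)\}$ over points where $\Gset$ carries positive measure; an isolated point or a measure-zero piece would contribute nothing to the integral. We will state that $\Gset$ is taken to be the closure of its interior, or equivalently that the spanning condition refers to the support of the averaging measure. Under that reading, $\bw$ is orthogonal to a spanning set of $\Cset^M$, so $\bw=\mathbf{0}$. Hence $\Rg\succ 0$, and $\Rg^{\pm1/2}$ exist.

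\textbf{Step 2 (bound and equality).} Set $\bu = \Rg^{1/2}\bw$ and $\bv = \Rg^{-1/2}\ba_t$. Then
\begin{equation*}
\ba_t^H\bw = \bv^H\bu, \qquad \bw^H\Rg\bw = \|\bu\|_2^2 .
\end{equation*}
Cauchy--Schwarz gives
\begin{equation*}
|\bv^H\bu|^2 \le \|\bv\|_2^2\,\|\bu\|_2^2 ,
\end{equation*}
and $\|\bv\|_2^2 = \ba_t^H\Rg^{-1}\ba_t = \rho^\star$. Dividing by $\|\bu\|_2^2>0$, which is positive for $\bw\neq\mathbf{0}$ by Step 1, yields~\eqref{eq:bound-strict}.

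Equality in Cauchy--Schwarz holds iff $\bu\propto\bv$. This is equivalent to $\Rg^{1/2}\bw\propto\Rg^{-1/2}\ba_t$, that is, $\bw\propto\Rg^{-1}\ba_t$. The equality case assumes $\ba_t\neq\mathbf{0}$, which holds because $|[\ba_t]_m| = 1/r_m(\bq^\star)>0$.

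Alternatively, the bound is the maximal generalized Rayleigh quotient of the rank-one pencil $(\ba_t\ba_t^H,\Rg)$. This also connects the result to Theorem~\ref{thm:structure}: it is the special case in which $\mu$ is uniform over $\Gset$ and $\nu=0$.
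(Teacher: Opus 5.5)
Your proof is correct and matches the paper's own argument step for step: $\Rg \succeq 0$ from the integral form of $\bx^H\Rg\bx$, definiteness from the spanning hypothesis, then Cauchy--Schwarz in the whitened variable $\bu = \Rg^{1/2}\bw$, with equality iff $\bu \propto \Rg^{-1/2}\ba_t$. Your caveat on the a.e.-to-everywhere step is a real refinement, since the paper passes directly from a vanishing integral to $\ba(\bq)^H\bx = 0$ on all of $\Gset$; of your two readings, ``spanning over the support of the averaging measure'' is the right one, because for a segment like the $\Gset$ of Proposition~\ref{prop:asymp} the interior in $\Rset^3$ is empty and the relevant measure is arc length.
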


% \begin{IEEEproof}
% \textbf{$\Rg \succ 0$.} For any $\bx \neq 0$, $\bx^H\Rg\bx = |\Gset|^{-1}\int_\Gset |\ba(\bq)^H\bx|^2 d\bq \ge 0$.
% If equality holds, then $\ba(\bq)^H\bx = 0$ for almost all $\bq$; by
% continuity and density, for all $\bq \in \Gset$. The spanning hypothesis
% forces $\bx = 0$, contradiction.
% \begin{IEEEproof}
% \textbf{Positive definiteness of $\Rg$.}
% We show that $\Rg \succ 0$ under spanning hypothesis. Fix any
% $\bx \in \Cset^M$ and consider quadratic form
% \[
% \bx^H \Rg\, \bx \;=\; \frac{1}{|\Gset|}\int_\Gset |\ba(\bq)^H \bx|^2\, d\bq.
% \]
% The integrand is non-negative, so $\bx^H \Rg\, \bx \ge 0$, i.e.,
% $\Rg \succeq 0$. Suppose that $\bx^H \Rg\, \bx = 0$ for some
% $\bx$. Since the non-negative continuous function
% $|\ba(\bq)^H \bx|^2$ integrates to zero, it vanish almost
% everywhere on $\Gset$. By continuity of the steering vector
% $\ba(\cdot)$, this forces the function to vanish identically over
% the entire domain $\Gset$:
% \begin{equation}\label{eq:Rg-pd-ae}
% \ba(\bq)^H \bx \;=\; 0 \quad \text{for every } \bq \in \Gset.
% \end{equation}
% The spanning hypothesis
% $\mathrm{span}\{\ba(\bq) : \bq \in \Gset\} = \Cset^M$ forces
% $\bx = \mathbf{0}$. Therefore, $\bx^H \Rg\, \bx = 0$ implies
% $\bx = \mathbf{0}$, which is the definition of $\Rg \succ 0$.
% \end{IEEEproof}
\begin{IEEEproof}
\textbf{Positive definiteness of $\Rg$.}
For any $\bx \in \Cset^M$,
\[
\bx^H \Rg\, \bx = \frac{1}{|\Gset|}\int_\Gset |\ba(\bq)^H \bx|^2\, d\bq \ge 0,
\]
so $\Rg \succeq 0$. If $\bx^H \Rg\, \bx = 0$, the non-negative
continuous integrand $|\ba(\bq)^H \bx|^2$ integrates to zero and
hence vanishes identically on $\Gset$, giving
$\ba(\bq)^H \bx = 0$ for all $\bq \in \Gset$. The spanning
hypothesis $\mathrm{span}\{\ba(\bq) : \bq \in \Gset\} = \Cset^M$
then forces $\bx = \mathbf{0}$. Thus $\Rg \succ 0$.
\end{IEEEproof}
\textbf{Cauchy-Schwarz.} Substituting $\bu = \Rg^{1/2}\bw$,
$|\ba_t^H\bw|^2 = |\langle \Rg^{-1/2}\ba_t, \bu \rangle|^2 \le
\|\Rg^{-1/2}\ba_t\|^2 \|\bu\|^2 = \rho^\star \cdot \bw^H\Rg\bw$.
Equality iff $\bu \propto \Rg^{-1/2}\ba_t$, i.e., $\bw \propto \Rg^{-1}\ba_t$.

\begin{corollary}[Spectral Decomposition and Harmonic Mean Form]\label{cor:spectral}
Let $\Rg = \sum_{i=1}^M \lambda_i \bv_i \bv_i^H$. Then
\begin{equation}\label{eq:spectral}
\rho^\star = \sum_{i=1}^{M} \frac{|\bv_i^H \ba_t|^2}{\lambda_i} = \frac{\|\ba_t\|^2}{H_\alpha},
\end{equation}
where $H_\alpha \triangleq (\sum_i \alpha_i \lambda_i^{-1})^{-1}$ is the
$\alpha$-weighted harmonic mean of the eigenvalues with
$\alpha_i = |\bv_i^H \ba_t|^2 / \|\ba_t\|^2$. Hence $\rho^\star$ is
maximized when $\{\alpha_i\}$ concentrate on the smallest eigenvalues.
\end{corollary}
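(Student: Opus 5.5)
The plan is to diagonalize $\Rg$ and read off $\rho^\star$ directly. By Theorem~\ref{thm:bound}, $\Rg \succ 0$ and is Hermitian, so it admits a spectral decomposition $\Rg = \sum_{i=1}^M \lambda_i \bv_i\bv_i^H$ with $\lambda_i > 0$ and $\{\bv_i\}$ an orthonormal basis of $\Cset^M$. Consequently
\[
\Rg^{-1} = \sum_{i=1}^M \lambda_i^{-1}\bv_i\bv_i^H .
\]
Substituting into $\rho^\star = \ba_t^H\Rg^{-1}\ba_t$ gives
\[
\rho^\star = \sum_i \lambda_i^{-1}(\ba_t^H\bv_i)(\bv_i^H\ba_t) = \sum_i |\bv_i^H\ba_t|^2/\lambda_i,
\]
which is the first equality in~\eqref{eq:spectral}.

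For the harmonic-mean form, assume $\ba_t \neq \mathbf{0}$; this holds since each entry $e^{jkr_m}/r_m$ has nonzero modulus. Factor $\|\ba_t\|^2$ out of the sum and define $\alpha_i = |\bv_i^H\ba_t|^2/\|\ba_t\|^2$. Parseval's identity in the orthonormal basis $\{\bv_i\}$ gives $\sum_i|\bv_i^H\ba_t|^2 = \|\ba_t\|^2$. Hence $\alpha_i \ge 0$ and $\sum_i\alpha_i = 1$, so $\{\alpha_i\}$ is a legitimate probability weighting. Then
\[
\rho^\star = \|\ba_t\|^2\sum_i\alpha_i\lambda_i^{-1} = \|\ba_t\|^2/H_\alpha
\]
by the definition of the weighted harmonic mean $H_\alpha$.

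For the concluding qualitative claim, view $\sum_i\alpha_i\lambda_i^{-1}$ as a linear function of $\alpha$ on the probability simplex, with the eigenvalues held fixed. Its maximum is attained at the vertex that places all weight on $\lambda_{\min}$, where it equals $1/\lambda_{\min}$. More generally, moving weight from larger to smaller eigenvalues strictly increases the sum. Therefore $\rho^\star \le \|\ba_t\|^2/\lambda_{\min}$, with equality iff $\ba_t$ lies in the eigenspace of $\lambda_{\min}$.

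No step is a real obstacle. The only point needing care is stating precisely what "maximized when $\alpha_i$ concentrate on the smallest eigenvalues" means. I would formalize it as the simplex/vertex argument above, noting explicitly that the $\lambda_i$ are treated as fixed while $\alpha$ varies.
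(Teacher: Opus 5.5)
Your proposal is correct and is exactly the direct argument the paper leaves implicit, since the corollary is stated there without a written proof: substitute the eigendecomposition of the positive definite $\Rg$ into $\rho^\star = \ba_t^H\Rg^{-1}\ba_t$, then use Parseval to normalize the weights $\alpha_i$. Your simplex/vertex reading of the final claim sharpens what the paper only asserts informally, holding the eigenvalues fixed and giving equality iff $\ba_t$ lies in the $\lambda_{\min}$-eigenspace.
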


%\subsection{Geometric Degeneracy: Quantitative Asymptotic Analysis}\label{sec:degeneracy}

\subsection{Geometric Degeneracy: Quantitative Asymptotic Analysis}\label{sec:degeneracy}

Theorem~\ref{thm:bound} characterizes the SIR ceiling
$\rho^\star = \ba_t^H \Rg^{-1} \ba_t$ as an abstract spectral
quantity, but does not by itself reveal how $\rho^\star$ depends on
the physical formation parameters $D$, $L$, $h_t$, $h_s$, or how it
scales with the platform count $M$. The purpose of this subsection
is to make the dependence explicit in the operationally relevant
\emph{near-collinear regime} introduced in
Section~\ref{sec:near-collinear}, in which all of $D/h_s$, $L/h_s$,
and $h_t/h_s$ are simultaneously small.

We emphasize that the analysis here concerns the \emph{nominal}
geometry: the platform positions $\{\bp_m\}$ are deterministic and
exactly known, and the small parameter
$\varepsilon \triangleq \max\{D/h_s, L/h_s, h_t/h_s\}$ measures the
\emph{intrinsic} angular degeneracy between the target and the
ground protection region. Robustness against \emph{deviations} of
the actual positions from their nominal values, for example due to
finite orbit-determination accuracy, is a separate question
addressed in Section~\ref{sec:robustness}.

\begin{proposition}[Quantitative Asymptotic Bound]\label{prop:asymp}

Consider element positions $\bp_m = (x_m,0, h_s)$ with $|x_m| \le D/2$,
$\Gset = \{(x, 0, 0) : |x| \le L\}$, target $\bq^\star = (0, 0, h_t)$.
Define $\varepsilon \triangleq \max\{D/h_s, L/h_s, h_t/h_s\}$. Then:
\begin{enumerate}
\item[(a)] \begin{equation}\label{eq:Rg-expansion}
\Rg = h_s^{-2}\bI + h_s^{-2}\bDelta + \mathcal{O}(\varepsilon^4 / h_s^2)
\end{equation}
where $\bDelta$ is Hermitian with $\|\bDelta\|_{\mathrm{op}} = \mathcal{O}(\varepsilon^2)$.
\item[(b)] \begin{equation}\label{eq:asymp}
\rho^\star = M(1 + \mathcal{O}(\varepsilon^2)).
\end{equation}
\end{enumerate}
\end{proposition}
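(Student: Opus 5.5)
The plan is to compute $\Rg$ entrywise from its definition, $[\Rg]_{mn} = |\Gset|^{-1}\int_{-L}^{L} e^{jk(r_m(x)-r_n(x))}\,(r_m(x) r_n(x))^{-1}\,dx$, where $r_m(x) = \sqrt{h_s^2 + (x-x_m)^2}$. I will then expand every ingredient in the small parameter $\varepsilon$, and finally feed the expansion into the closed form $\rho^\star = \ba_t^H\Rg^{-1}\ba_t$ of Theorem~\ref{thm:bound} by a Neumann series. Positive definiteness of $\Rg$, and hence invertibility, is already supplied by Theorem~\ref{thm:bound}. The expansion only needs to control the perturbation.

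\textbf{Step 1 (amplitudes).} Write $r_m(x) = h_s\sqrt{1+u_m(x)}$ with $u_m(x) = (x-x_m)^2/h_s^2$. Since $|x|\le L$ and $|x_m|\le D/2$, we have $u_m = \mathcal{O}(\varepsilon^2)$ uniformly on $\Gset$. Hence $(r_m r_n)^{-1} = h_s^{-2}\bigl(1 - \tfrac12(u_m+u_n) + \mathcal{O}(\varepsilon^4)\bigr)$, and the diagonal entries are exactly $h_s^{-2}(1+\mathcal{O}(\varepsilon^2))$.

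\textbf{Step 2 (phases, off-diagonals).} The phase difference is
\[
k(r_m-r_n) = \frac{k}{2h_s}\bigl[(x_m^2-x_n^2) - 2x(x_m-x_n)\bigr] + \mathcal{O}\bigl(k h_s\varepsilon^4\bigr).
\]
Averaging the linear-in-$x$ part over $[-L,L]$ produces a sinc factor $\mathrm{sinc}\bigl(kL(x_m-x_n)/h_s\bigr)$ multiplied by a unimodular constant. I will define $\bDelta$ as $h_s^2\Rg - \bI$ truncated at second order. The Hermitian symmetry of $\bDelta$ is automatic because $\Rg$ is Hermitian. The claim $\|\bDelta\|_{\mathrm{op}} = \mathcal{O}(\varepsilon^2)$ then reduces to showing that each off-diagonal average is $\mathcal{O}(\varepsilon^2)$, with an $M$-dependent constant absorbed into the $\mathcal{O}$.

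\textbf{Main obstacle.} The diagonal part is routine. The off-diagonal part is not: the phase carries the large factor $k$, so the smallness of the off-diagonal averages does not follow from $\varepsilon\to 0$ alone. I would first try to close it under an explicit scaling convention in which $kD^2/h_s$ and $kDL/h_s$ are treated as $\mathcal{O}(\varepsilon^2)$-controlled quantities. I read this as the implicit convention of the proposition. If that fails, the alternative is the resolved regime $kL\,\min_{m\neq n}|x_m-x_n|/h_s \gg 1$, where the sinc decay makes the off-diagonals small. Either way, I would state the needed convention explicitly, because this is where the argument is genuinely delicate.

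\textbf{Step 3 (part (b)).} Given (a), write $h_s^{-2}\Rg^{-1} = \bI - \bDelta + \mathcal{O}(\varepsilon^4)$, which is valid once $\|\bDelta\|_{\mathrm{op}}<1$. Then
\[
\rho^\star = h_s^2\|\ba_t\|^2\bigl(1+\mathcal{O}(\varepsilon^2)\bigr),
\]
using $|\ba_t^H\bDelta\ba_t| \le \|\bDelta\|_{\mathrm{op}}\|\ba_t\|^2$. Next, $|[\ba_t]_m|^{-2} = r_m(\bq^\star)^2 = (h_s-h_t)^2 + x_m^2$. A second difficulty appears here: $(h_s-h_t)^2/h_s^2 = 1 - 2h_t/h_s + \dots$ is only $1+\mathcal{O}(\varepsilon)$. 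To obtain the stated $M(1+\mathcal{O}(\varepsilon^2))$, I would either normalize $\ba_t$ by its range $h_s - h_t$, or absorb this first-order term by measuring $\rho^\star$ relative to $\|\ba_t\|^2 h_s^2$. Otherwise the honest conclusion is $\rho^\star = M(1+\mathcal{O}(\varepsilon))$, and I would flag this in the write-up.
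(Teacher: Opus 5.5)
\textbf{There is a genuine gap, in part (a).} Your skeleton is the paper's: amplitude expansion, a Fresnel phase producing a sinc kernel off the diagonal, then a Neumann series with $|\ba_t^H\bDelta\ba_t|\le\|\bDelta\|_{\mathrm{op}}\|\ba_t\|^2$. The step you leave open is also the one the paper's sketch asserts without proof. Your proposal does not close it, and your first candidate convention cannot.

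\emph{Why the unresolved convention fails.} If $kD^2/h_s$ and $kDL/h_s$ are small, the phase $k(r_m-r_n)$ barely varies over $\Gset$ and every sinc factor is $\approx 1$. Then $h_s^2\Rg\approx\bv\bv^H$ with $|v_m|=1$, so $\|\bDelta\|_{\mathrm{op}}\approx M-1$, $\Rg$ is nearly rank one, and your Step~3 Neumann step is invalid.

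\emph{What the resolved regime actually gives.} Only the resolved regime can work; this is the paper's route via $kh_s\gg1$. There the off-diagonal size is set by a parameter independent of $\varepsilon$: $|[h_s^2\Rg]_{mn}|\le C\,h_s/(kL|x_m-x_n|)$. The paper's $\mathcal{O}((kL)^{-1})$ drops the factor $h_s/|x_m-x_n|\ge\varepsilon^{-1}$. Gershgorin then gives $\|\bDelta\|_{\mathrm{op}}=\mathcal{O}(\varepsilon^2)+\mathcal{O}(\gamma\log M)$, with $\gamma\triangleq h_s/(kL\min_{m\neq n}|x_m-x_n|)$. For an equispaced array with $L,D\propto\varepsilon h_s$ at fixed $kh_s$, $\gamma\propto M/(kh_s\varepsilon^2)$, which grows as $\varepsilon\to0$. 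So the claim $\|\bDelta\|_{\mathrm{op}}=\mathcal{O}(\varepsilon^2)$ needs an explicit joint-scaling hypothesis such as $\gamma\log M=\mathcal{O}(\varepsilon^2)$, or $\gamma$ must be carried as a separate error term. Neither you nor the paper supplies this.

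\emph{The sinc cannot come from truncating the phase.} You cannot discard the $\mathcal{O}(kh_s\varepsilon^4)$ phase remainder; at $kh_s\approx2.5\times10^7$ it is enormous. Use non-stationary phase instead. Since $s\mapsto s/\sqrt{h_s^2+s^2}$ has derivative $h_s^2(h_s^2+s^2)^{-3/2}$, the phase derivative $\partial_x(r_m-r_n)=\frac{x-x_m}{r_m}-\frac{x-x_n}{r_n}$ has modulus at least $(1+\tfrac94\varepsilon^2)^{-3/2}|x_m-x_n|/h_s$ on $\Gset$. One integration by parts then gives the bound above with no Fresnel truncation.

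\textbf{Part (b): your diagnosis is correct.} It exposes an actual error in the paper's argument, which asserts $\|\ba_t\|^2=(M/h_s^2)(1+\mathcal{O}(\varepsilon^2))$. Every ground point lies at range at least $h_s$, so $[\Rg]_{mm}=h_s^{-2}(1+\mathcal{O}(\varepsilon^2))$ has no first-order term to cancel $r_m(\bq^\star)^2=(h_s-h_t)^2+x_m^2$. Hence $h_s^2\|\ba_t\|^2=M(1+2h_t/h_s)+\mathcal{O}(M\varepsilon^2)$, and in the resolved regime $\rho^\star=M\bigl(1+2h_t/h_s+\mathcal{O}(\varepsilon^2+\gamma\log M)\bigr)$. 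So (b) as written holds only under an extra assumption such as $h_t/h_s=\mathcal{O}(\varepsilon^2)$; otherwise the correct rate is $\mathcal{O}(\varepsilon)$. Renormalizing $\ba_t$ by $h_s-h_t$ changes the statement rather than proving it, so present it that way. With the resolved-regime hypothesis stated explicitly and this correction, your Neumann-series argument goes through unchanged.
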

\begin{IEEEproof}
\textbf{(a)} For $\bq = (x_g, 0, 0)$ and $\bp_m = (x_m, 0, h_s)$,
$r_m(\bq) = h_s\sqrt{1 + u_m^2}$ with $u_m = (x_m - x_g)/h_s = \mathcal{O}(\varepsilon)$,
so the prefactor $1/(r_m r_n) = h_s^{-2}(1 + \mathcal{O}(\varepsilon^2))$.
The Fresnel expansion $\Delta r_{mn} = r_m - r_n \approx
(x_m - x_n)(x_m + x_n - 2x_g)/(2h_s)$ makes the phase $k\Delta r_{mn}$
\emph{linear} in $x_g$ (slope $-k(x_m - x_n)/h_s$), with no stationary
point. Since $kh_s = 2.5\times 10^7 \gg 1$ at $f = 2$~GHz, the
off-diagonal
$[\Rg]_{mn} = (2L)^{-1}\int_{-L}^L e^{jk\Delta r_{mn}}/(r_m r_n)\,dx_g$
reduces to a Fourier transform of the window $[-L,L]$, i.e.\ a sinc
kernel, giving $|[\Rg]_{mn}| = \mathcal{O}((kL)^{-1}/h_s^2)$ for
$m \neq n$. With $[\Rg]_{mm} = h_s^{-2}(1 + \mathcal{O}(\varepsilon^2))$,
this yields~\eqref{eq:Rg-expansion} with
$\|\bDelta\|_{\mathrm{op}} = \mathcal{O}(\varepsilon^2)$.

\textbf{(b)} From $\Rg = h_s^{-2}(\bI + \bDelta)$ with
$\|\bDelta\|_{\mathrm{op}} \ll 1$, the Neumann series gives
$\Rg^{-1} = h_s^{2}(\bI - \bDelta) + \mathcal{O}(h_s^{2}\varepsilon^4)$.
Since $\|\ba_t\|^2 = (M/h_s^2)(1 + \mathcal{O}(\varepsilon^2))$ and
$|\ba_t^H \bDelta \ba_t| \le \|\bDelta\|_{\mathrm{op}}\|\ba_t\|^2
= \mathcal{O}(M\varepsilon^2/h_s^{2})$,
$\rho^\star = \ba_t^H \Rg^{-1} \ba_t
= h_s^2\|\ba_t\|^2 - h_s^2\,\ba_t^H \bDelta \ba_t + \mathcal{O}(M\varepsilon^4)
= M(1 + \mathcal{O}(\varepsilon^2)).$\end{IEEEproof}

\begin{remark}[Design Implication]\label{rem:fundamental-limit}
% Proposition~\ref{prop:asymp} reveals that under near-collinear geometry, SIR scales with array order $M$ rather than aperture span $D$.
Proposition~\ref{prop:asymp} quantifies the small-$\varepsilon$ behavior
of the SIR upper bound, where $\varepsilon$ is precisely the
\emph{near-collinear regime} introduced in Section~\ref{sec:model}-A
(in which $D/h_s$, $L/h_s$, and $h_t/h_s$ are all small). It reveals
that under this regime, SIR scales with the array order $M$ rather
than the aperture span $D$.
\end{remark}

\begin{remark}[Per-platform spatial processing is bounded by the same geometric ceiling]
\label{rem:per-platform}
Replacing $[\ba(\bq)]_m$ with
$\sqrt{G_m(\bq)}[\ba(\bq)]_m$ in
Theorem~\ref{thm:bound} yields
\begin{equation}\label{eq:rho-gen}
\rho^\star_{\rm gen}
\approx
10\log_{10}M
+
10\log_{10}\bar G_{\rm eff}
\quad\text{(dB)},
\end{equation}
where $\bar G_{\rm eff}$ denotes the platform-level discrimination
between $\bq^\star$ and $\Gset$. The location of this directivity is
crucial. A receive-side directional antenna resolves the target and
interference region with a large angular separation, thereby realizing
its full discrimination gain ($\bar G_{\rm eff}=G_R$). By contrast, a
transmit-side local aperture views both within the same near-collinear
cone and provides less than $1$~dB discrimination, regardless of the
local beamforming strategy. Consequently, enlarging the per-platform
aperture alone cannot overcome the geometric SIR ceiling.

The resulting coherence regimes are summarized in
Table~\ref{tab:rho-star-regimes}. Hierarchical inter-platform
beamforming follows the $10\log_{10}M$ scaling of
Proposition~\ref{prop:asymp}, whereas local subarrays remain
resolution-limited. Full joint processing across all $MN$ elements can
surpass this ceiling by increasing the effective design dimension, but
requires network-wide phase synchronization that is generally
impractical for current distributed LEO systems.
The proposed dual framework applies uniformly to all three regimes via
$\rho^\star=\ba_t^H\Rg^{-1}\ba_t$.

\begin{table}[ht]
\caption{Coherence regimes and SIR scaling.}
\label{tab:rho-star-regimes}
\centering
\footnotesize
\setlength{\tabcolsep}{3pt}
\renewcommand{\arraystretch}{1.15}
\begin{tabular}{@{}lcccl@{}}
\toprule
Regime & $\Mtot$ & Resolution & DoF Status & SIR Scaling \\
\midrule
Inter-platform (hierarchical)
& $M$
& SR
& $\Mtot\!\ll\! r_{\rm eff}$
& $\approx 10\log_{10}M$
\\

Local subarray
& $N_m$
& sub-SR
& $\Mtot\!\ll\! r_{\rm eff}$
& $\approx 0$
\\

Joint $MN$ coherence
& $MN$
& SR
& $\Mtot\!\sim\! r_{\rm eff}$
& $>10\log_{10}M$
\\
\bottomrule
\end{tabular}
\end{table}
\end{remark}

\subsection{Robustness to Position Perturbation}\label{sec:robustness}
Whereas Section~\ref{sec:degeneracy} quantified the SIR ceiling under
the \emph{nominal} formation geometry, we now turn to the orthogonal
question of how the SIR degrades when the actual element positions
$\hat\bp_m = \bp_m + \boldsymbol{\xi}_m$ deviate from the design
positions $\bp_m$, where $\boldsymbol{\xi}_m \in \Rset^3$ are
unmodeled perturbations. This is critical for satellite applications
where orbit determination has finite precision.
\begin{remark}[Position-Perturbation Sensitivity]\label{rem:perturbation}
Let $\boldsymbol{\xi}_m \in \Rset^3$ be i.i.d.\ isotropic zero-mean Gaussian
random vectors with $\E[\boldsymbol{\xi}_m\boldsymbol{\xi}_m^H] = (\sigma_p^2/3)\bI_3$,
so $\E[\|\boldsymbol{\xi}_m\|^2] = \sigma_p^2$, and define the effective
phase variance $\sigma_\phi^2 \triangleq (k\sigma_p)^2/3$. Under the
MVDR design weight $\bw = \bw_{\rm MVDR} = \Rg^{-1}\ba_t$, the Cox
phase-perturbation identity~\cite{ref:cox1987} (specialized to the
present geometry; derivation below) yields
\begin{equation}\label{eq:perturb-loss}
\E[\hat\rho] \;\approx\; \frac{e^{-\sigma_\phi^2}\,|\ba_t^H\bw|^2}
{e^{-\sigma_\phi^2}\,\bw^H\Rg\bw \;+\; \sigma_\phi^2\,\bw^H\mathrm{diag}(\Rg)\bw}.
\end{equation}
Since MVDR drives $\bw^H\Rg\bw$ toward zero by null formation, the
second denominator term typically dominates, so the SIR degradation
scales as $1/\sigma_\phi^2$ rather than $e^{-\sigma_\phi^2}$. This is
a direct application of Cox's i.i.d.\ phase-error
model~\cite{ref:cox1987} to the satellite-formation MVDR setting and
is reported here as an engineering tolerance budget rather than a new
theoretical contribution.
\end{remark}

\smallskip
\noindent\emph{Derivation sketch of \eqref{eq:perturb-loss}.}\quad
A first-order Taylor expansion of the range
$\hat r_m(\bq) = r_m - \langle\boldsymbol{\xi}_m,
\hat{\boldsymbol{r}}_m\rangle + \mathcal{O}(\|\boldsymbol{\xi}\|^2/r_m)$,
with $\hat{\boldsymbol{r}}_m \triangleq (\bq-\bp_m)/r_m$, gives the
phase-only steering perturbation
$[\hat\ba(\bq)]_m \approx [\ba(\bq)]_m\,
e^{-jk\langle\boldsymbol{\xi}_m,\hat{\boldsymbol{r}}_m\rangle}$;
the amplitude correction is $\mathcal{O}(\|\boldsymbol{\xi}\|/r_m)
\sim 10^{-7}$ at LEO ranges and is negligible. The 1D projection of
$\boldsymbol{\xi}_m$ along $\hat{\boldsymbol{r}}_m$ is
$\mathcal{N}(0,\sigma_p^2/3)$, so
$\E[e^{-jk\langle\boldsymbol{\xi}_m,\hat{\boldsymbol{r}}_m\rangle}]
= e^{-\sigma_\phi^2/2}$. Cox's identity~\cite{ref:cox1987} for any
deterministic Hermitian $\bM$ under i.i.d.\ phase perturbations,
$\E[\hat\bM] = e^{-\sigma_\phi^2}\bM
+ (1-e^{-\sigma_\phi^2})\,\mathrm{diag}(\bM)$, applied to
$\bM = \ba_t\ba_t^H$ (numerator) and $\bM = \Rg$ (denominator) of
the perturbed mean SIR, with $1-e^{-\sigma_\phi^2}\approx
\sigma_\phi^2$, yields~\eqref{eq:perturb-loss}. Since MVDR drives
$\bw^H\Rg\bw$ small, the $\sigma_\phi^2$ leakage term governs the
loss; at $f = 2$~GHz, direct evaluation of~\eqref{eq:perturb-loss}
for the configuration of Section~\ref{sec:numerics-robust} predicts
$2.9$~dB loss at $\sigma_p = 1.98$~cm, and
Fig.~\ref{fig:robustness} confirms agreement within
$0.3$~dB.\hfill$\square$

\section{Algorithms via Dual Specialization}\label{sec:algos}

\subsection{Roadmap: From Dual Structure to Algorithms}
\label{sec:algo-roadmap}

The four algorithms presented in this section are all specializations of
the optimal beamformer characterized in
Theorem~\ref{thm:structure},
\[
\bw^\star=\Reff(\mu^\star,\nu^\star)^{-1}\ba_t,
\]
where $\mu^\star$ is a non-negative measure over the protection region
$\Gset$ and $\nu^\star$ is the dual variable associated with the transmit
power constraint. Different choices of $\mu^\star$ recover MVDR, LCMV,
and cutting-plane SOCP, while projection of the MVDR solution onto the
constant-modulus manifold yields the manifold optimization algorithm.
Table~\ref{tab:algos} summarizes the relationship between the algorithms,
their theoretical foundations, computational complexity, and whether a
constant-modulus (CM) constraint is enforced. Here, $K$ denotes the
number of prescribed constraints in LCMV, $K_k$ is the cutting-plane
active-set size at iteration $k$, and $T$ is the number of Riemannian
iterations.

\begin{table}[h]
\centering
\setlength{\tabcolsep}{3pt}
\footnotesize
\renewcommand{\arraystretch}{1.15}
\caption{Algorithms as specializations of the optimal dual measure $\mu^\star$.}
\label{tab:algos}
\begin{tabular}{@{}llccc@{}}
\toprule
Algorithm & $\mu^\star$ structure & Theory & Complexity & CM \\
\midrule
MVDR     & continuous (limit) & Thm.\,\ref{thm:structure},\,\ref{thm:bound} &
$\mathcal{O}(M^3)$ & no \\

LCMV     & $K$ atoms & Thm.\,\ref{thm:structure},\,\ref{thm:sparsity} &
$\mathcal{O}(M^3+K^3)$ & no \\

SOCP-CP  & adaptive atomic &
Thm.\,\ref{thm:sparsity}, Prop.\,\ref{prop:cp-conv} &
$\mathcal{O}(K_kM^2+M^3)$/iter & no \\

Manifold & MVDR-initialized &
Thm.\,\ref{thm:structure}; (C3) &
$\mathcal{O}(M^3+TM^2)$ & yes \\
\bottomrule
\end{tabular}
\end{table}

Each algorithm is presented together with its implementation and
convergence analysis. Algorithm-specific convergence results are derived
in the corresponding subsections, while the architecture-level results
of Section~\ref{sec:dual} apply uniformly to all four algorithms.
\subsection{MVDR: Unconstrained Mean-SIR Optimum}\label{sec:mvdr}

Rather than imposing the pointwise hard constraints of~\eqref{eq:canonical},
the classical MVDR formulation minimizes the \emph{average} ground power
$\bw^H\Rg\bw$ subject only to a target-gain constraint, yielding the
closed-form solution
\begin{equation}\label{eq:mvdr}
\bw_{\text{MVDR}} = \Rg^{-1}\ba_t.
\end{equation}
Formally, this matches~\eqref{eq:wstar} under the substitution
$\mu = |\Gset|^{-1}d\bq$, $\nu = 0$, in the sense that the dual-parameterized
family contains MVDR as a limiting member; however, MVDR is not itself an
optimum of the hard-constrained problem~\eqref{eq:canonical}, since by
Theorem~\ref{thm:sparsity} every optimal dual measure of~\eqref{eq:canonical}
is purely atomic. The full procedure is
summarized in Algorithm~\ref{alg:mvdr}. In the implementation of Algorithm~\ref{alg:mvdr}, the population
covariance $\Rg = |\Gset|^{-1}\int_\Gset \ba(\bq)\ba(\bq)^H\, d\bq$ is
replaced by the sample covariance $\bA_g^H \bA_g / N$, where the rows
of $\bA_g \in \Cset^{N \times M}$ collect the steering vectors
$\{\ba(\bq_n)^T\}_{n=1}^N$ at $N$ ground points
$\{\bq_n\}_{n=1}^N \subset \Gset$ drawn uniformly over the protection
region. A small diagonal loading
$\epsilon\,\tr(\Rg)/M \cdot \bI$ is then added for numerical
conditioning. We retain MVDR as the unconstrained-SIR benchmark. Complexity: $\mathcal{O}(NM^2 + M^3)$.

\begin{algorithm}[!h]
\small
\caption{MVDR Beamformer}\label{alg:mvdr}
\begin{algorithmic}[1]
\Require $\ba_t$, ground samples $\bA_g \in \Cset^{N \times M}$, $P_{\text{tot}}$
\State $\Rg \leftarrow \bA_g^H \bA_g / N$ \Comment{sample covariance}
\State $\Rg \leftarrow \Rg + \epsilon\, \tr(\Rg)/M \cdot \bI$ \Comment{diagonal loading}
\State $\bw \leftarrow \Rg^{-1}\, \ba_t$
\State $\bw \leftarrow \bw \sqrt{P_{\text{tot}} / (\bw^H\bw)}$
\State \Return $\bw$
\end{algorithmic}
\end{algorithm}

\subsection{LCMV: Discrete Atomic Measure}\label{sec:lcmv}

Setting $\mu = \sum_{i=1}^K \mu_i \delta_{\bq_i}$ with $\mu_i \to \infty$
yields the LCMV/Frost beamformer~\cite{ref:frost1972}:
\begin{equation}\label{eq:lcmv}
\bw_{\text{LCMV}} = \Rg^{-1}\bC \big(\bC^H\Rg^{-1}\bC\big)^{-1}\bef,
\end{equation}
where $\bC = [\ba_t, \ba(\bq_1),\dots,\ba(\bq_K)]$ and $\bef = [1,0,\dots,0]^T$. The full procedure is summarized in Algorithm~\ref{alg:lcmv}.
\begin{algorithm}[ht]
\small
\caption{LCMV Beamformer}\label{alg:lcmv}
\begin{algorithmic}[1]
\Require $\ba_t$, ground samples $\bA_g \in \Cset^{N \times M}$, constraint atoms $\{\bq_i\}_{i=1}^K \subset \Gset$, $P_{\text{tot}}$
\State $\Rg \leftarrow \bA_g^H \bA_g / N$ \Comment{sample covariance}
\State $\Rg \leftarrow \Rg + \epsilon\, \tr(\Rg)/M \cdot \bI$ \Comment{diagonal loading}
\State $\bC \leftarrow [\ba_t,\, \ba(\bq_1),\, \dots,\, \ba(\bq_K)]$
\State $\bef \leftarrow [1, 0, \dots, 0]^T$
\State $\bw \leftarrow \Rg^{-1}\bC(\bC^H\Rg^{-1}\bC)^{-1}\bef$
\State $\bw \leftarrow \bw\sqrt{P_{\text{tot}}/(\bw^H\bw)}$
\State \Return $\bw$
\end{algorithmic}
\end{algorithm}
\begin{remark}[LCMV Failure under Geometric Degeneracy]\label{rem:lcmv-fail}
When $\bq_i$ are selected as ground peaks, $\ba(\bq_i)$ are highly correlated
with $\ba_t$. The hard null constraint $\ba(\bq_i)^H\bw = 0$ then necessarily
attenuates $\ba_t^H\bw$, causing target signal collapse. We empirically
verify in Section~\ref{sec:numerics} that the condition number
$\kappa(\bC^H\Rg^{-1}\bC)$ grows from $\sim 23$ at $K=4$ to $\sim 10^{16}$
at $K=7$, signaling matrix near-singularity.
\end{remark}

% \subsection{SOCP with Cutting Plane}\label{sec:socp}

% The general SOCP~\eqref{eq:socp} learns $\mu^\star$ implicitly through
% worst-case constraint addition. By Proposition~\ref{prop:cp-conv}, Algorithm~\ref{alg:socp} converges
% asymptotically and terminates in finitely many iterations for any
% prescribed tolerance $\varepsilon > 0$. In our experiments
% (Section~\ref{sec:numerics}), active-set identification typically
% concludes within the first few outer iterations and the residual
% violation $|\ba(\bq^*)^H\bw_k| - t_g$ falls below $10^{-4}$ in well
% under $20$ iterations across all tested operating points.

% \begin{algorithm}[t]
% \small
% \caption{Cutting-Plane SOCP}\label{alg:socp}
% \begin{algorithmic}[1]
% \Require initial coarse grid $\Gset_0 \subset \Gset$, fine grid $\Gset_{\text{eval}}$
% \State $k \leftarrow 0$
% \Repeat
% \State Solve~\eqref{eq:socp} on $\Gset_k$ for $\bw_k$
% \State $\bq^* \leftarrow \argmax_{\bq \in \Gset_{\text{eval}}} |\ba(\bq)^H\bw_k|$
% \If{$|\ba(\bq^*)^H\bw_k| \le t_g + \epsilon$}\textbf{break}\EndIf
% \State $\Gset_{k+1} \leftarrow \Gset_k \cup \{\bq^*\}$, $k \leftarrow k+1$
% \Until{converged}
% \State \Return $\bw_k$
% \end{algorithmic}
% \end{algorithm}
\subsection{Cutting-Plane SOCP and Convergence}\label{sec:socp}

The general SOCP~\eqref{eq:socp} learns $\mu^\star$ implicitly through
adaptive constraint generation: an outer loop iteratively augments a
finite working set $\Gset_k \subset \Gset$ with the worst-case
constraint violator located by an inner oracle, and the SOCP is
re-solved on the updated set until no significant violation remains.
The full procedure is summarized in Algorithm~\ref{alg:socp}.

\begin{algorithm}[t]
\small
\caption{Cutting-Plane SOCP}\label{alg:socp}
\begin{algorithmic}[1]
\Require initial coarse grid $\Gset_0 \subset \Gset$, fine grid $\Gset_{\text{eval}}$
\State $k \leftarrow 0$
\Repeat
\State Solve~\eqref{eq:socp} on $\Gset_k$ for $\bw_k$
\State $\bq^* \leftarrow \argmax_{\bq \in \Gset_{\text{eval}}} |\ba(\bq)^H\bw_k|$
\If{$|\ba(\bq^*)^H\bw_k| \le t_g + \epsilon$}\textbf{break}\EndIf
\State $\Gset_{k+1} \leftarrow \Gset_k \cup \{\bq^*\}$, $k \leftarrow k+1$
\Until{converged}
\State \Return $\bw_k$
\end{algorithmic}
\end{algorithm}

The next result establishes that Algorithm~\ref{alg:socp} is
asymptotically convergent and terminates in finitely many iterations
for any prescribed tolerance, with the finite-support certificate of
Theorem~\ref{thm:sparsity} ensuring that the working set need contain
at most $M$ active atoms at convergence.

\begin{proposition}[Asymptotic convergence and finite $\varepsilon$-termination]\label{prop:cp-conv}
Let $J^\star$ denote the optimal value of~\eqref{eq:canonical} and
$J_k$ the value at outer iteration $k$ of Algorithm~\ref{alg:socp},
where the worst-violator step returns
$\bq_k = \arg\max_{\bq \in \Gset} |\ba(\bq)^H \bw_k|^2$. Then:
\begin{enumerate}
\item[(a)] (\emph{Monotonicity})
$\{J_k\}$ is monotone non-increasing: $J_{k+1} \le J_k$.
\item[(b)] (\emph{Asymptotic convergence})
$J_k \downarrow J^\star$ as $k \to \infty$.
\item[(c)] (\emph{Finite $\varepsilon$-termination})
For any $\varepsilon > 0$, there exists a finite
$K_\varepsilon < \infty$ such that
$|\ba(\bq_k)^H \bw_k|^2 - \tau \le \varepsilon$ for all $k \ge K_\varepsilon$.
\end{enumerate}
\end{proposition}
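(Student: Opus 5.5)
We read Algorithm~\ref{alg:socp} as applied to the canonical form~\eqref{eq:canonical}: at iteration $k$ it solves the relaxed problem
\[
J_k \triangleq \max_{\bw}\ \re\{\ba_t^H\bw\}\ \ \text{s.t.}\ \ |\ba(\bq)^H\bw|^2\le\tau\ \forall\bq\in\Gset_k,\ \ \|\bw\|_2^2\le P_{\rm tot},
\]
with optimizer $\bw_k$. The working set is then updated by $\Gset_{k+1}=\Gset_k\cup\{\bq_k\}$. The proof is the classical Kelley/Hettich argument for semi-infinite programs, specialized to this setting.

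\textbf{Part (a).} Since $\Gset_k\subset\Gset_{k+1}\subset\Gset$, the feasible sets are nested:
\[
\mathcal{F}\subset\mathcal{F}_{k+1}\subset\mathcal{F}_k .
\]
Maximizing over a smaller set gives
\[
J^\star\le J_{k+1}\le J_k .
\]
Each $J_k$ is attained, because $\mathcal{F}_k$ lies in the compact ball $\|\bw\|_2^2\le P_{\rm tot}$ and the objective is continuous. The sequence $\{J_k\}$ is bounded below by $J^\star$, so it converges to some $\bar J\ge J^\star$.

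\textbf{Part (b).} All iterates lie in the compact ball, so a subsequence $\bw_{k_j}$ converges to some $\bar\bw$. It suffices to show $\bar\bw\in\mathcal{F}$. Then
\[
\re\{\ba_t^H\bar\bw\}\le J^\star,
\]
while continuity gives $\re\{\ba_t^H\bar\bw\}=\lim_j J_{k_j}=\bar J\ge J^\star$, forcing $\bar J=J^\star$. Uniqueness of the optimum in Theorem~\ref{thm:structure} then upgrades this to convergence of the whole sequence, $\bw_k\to\bw^\star$.

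\textbf{Key step for (b).} Define $\phi(\bw)\triangleq\max_{\bq\in\Gset}|\ba(\bq)^H\bw|^2$. This maximum exists because $\Gset$ is compact and $\ba$ is continuous. For indices $i<j$, the point $\bq_{k_i}$ belongs to $\Gset_{k_j}$, so
\[
|\ba(\bq_{k_i})^H\bw_{k_j}|^2\le\tau .
\]
Set $C\triangleq\max_{\bq}\|\ba(\bq)\|\sqrt{P_{\rm tot}}$. For $\|\bx\|^2,\|\by\|^2\le P_{\rm tot}$, the map $\bw\mapsto|\ba(\bq)^H\bw|^2$ is Lipschitz in $\bw$ uniformly in $\bq$:
\[
\bigl||\ba(\bq)^H\bx|^2-|\ba(\bq)^H\by|^2\bigr|\le 2C\max_{\bq}\|\ba(\bq)\|\,\|\bx-\by\| .
\]
Apply this with $\bq=\bq_{k_i}$ to $\bw_{k_i}$ and $\bw_{k_j}$:
\[
\phi(\bw_{k_i})=|\ba(\bq_{k_i})^H\bw_{k_i}|^2\le\tau+L\|\bw_{k_i}-\bw_{k_j}\| ,
\]
where $L$ is the uniform Lipschitz constant. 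Let $j\to\infty$ and then $i\to\infty$. Continuity of $\phi$ gives $\phi(\bar\bw)\le\tau$, so $\bar\bw$ is feasible.

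\textbf{Part (c).} The same inequality already yields (c). Define the violation
\[
v_k\triangleq|\ba(\bq_k)^H\bw_k|^2-\tau=\phi(\bw_k)-\tau .
\]
Applying the Lipschitz bound to the full sequence rather than a subsequence gives
\[
v_k\le L\|\bw_k-\bw_{k+1}\| .
\]
By (b), $\bw_k\to\bw^\star$, so $\bw_k-\bw_{k+1}\to0$. Hence $v_k\to0$, and for any $\varepsilon>0$ there is a finite $K_\varepsilon$ beyond which $v_k\le\varepsilon$.

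If uniqueness is not invoked, a Bolzano–Weierstrass contradiction argument still works. Suppose $v_k>\varepsilon$ infinitely often. Along that subsequence, the bound $\varepsilon<v_{k_i}\le L\|\bw_{k_i}-\bw_{k_j}\|$ holds for all $j>i$. This forces pairwise distances to stay above $\varepsilon/L$, which is impossible for a sequence in a compact ball.

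\textbf{Main obstacle.} The delicate step is the uniform-continuity estimate linking past cuts to the current iterate. It needs compactness of $\Gset$ and continuity of $\ba$ exactly as assumed in Theorem~\ref{thm:structure}. It also needs the grid oracle $\Gset_{\rm eval}$ to be replaced by the exact $\arg\max$ over $\Gset$, as the statement stipulates.

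Theorem~\ref{thm:sparsity} is not needed for (a)–(c). It can only be added as a remark that the limiting active set supports an atomic dual with at most $M^2$ atoms.
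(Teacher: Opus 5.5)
Your proof is correct and follows the same Kelley/Hettich--Kortanek route as the paper's sketch: nested feasible sets give (a), feasibility of every subsequential limit of the iterates gives (b), and compactness plus continuity of the violation gives (c). Your uniform Lipschitz estimate, which ties each earlier cut $\bq_{k_i}$ to the later iterates $\bw_{k_j}$, is the rigorous form of the step the paper only gestures at with ``a persistent violation would force a strict decrease of $J_k$.'' Make your Bolzano--Weierstrass version of (c) the main argument. The appeal to uniqueness from Theorem~\ref{thm:structure} is not needed for the statement, and that uniqueness can fail in degenerate instances where $\nu^\star = 0$ and $\Reff^\star$ is singular.
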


\begin{IEEEproof}[Sketch]
(a) is immediate: adding constraints can only restrict the feasible
set, hence $J_{k+1} \le J_k$. (b) follows from the standard Kelley
cutting-plane convergence argument for continuous semi-infinite
programs~\cite[Thm.~7.2]{ref:hettich1993}: any subsequential limit
$\bw_\infty$ of the iterates is necessarily feasible
for~\eqref{eq:canonical}, since otherwise a persistent violation
would force a strict decrease of $J_k$. (c) is then immediate from
(b) and continuity of the violation function. 
\end{IEEEproof}

\begin{remark}[Convergence rate]\label{rem:cp-rate}
Standard Kelley-type cutting-plane methods on continuous semi-infinite programs do
not enjoy a worst-case linear convergence rate; sublinear
behavior with occasional zigzag is well-documented~\cite{ref:kelley1960,ref:hettich1993}.
However, under strict complementarity and once the active set
$\Gset_k$ has identified a neighborhood of $\mathrm{supp}(\mu^\star)$
(which by Theorem~\ref{thm:sparsity} contains at most $M^2$ atoms),
the algorithm reduces to a finite-dimensional Newton-type iteration on
the active subspace and exhibits local quadratic
convergence~\cite[Sec.~5]{ref:hettich1993}. In our experiments
(Section~\ref{sec:numerics}), this active-set identification typically
occurs within the first few iterations, and the algorithm meets a
$10^{-4}$ tolerance in well under $20$ outer iterations across all
tested operating points.
\end{remark}

\subsection{Manifold Optimization for Constant Modulus}\label{sec:manifold}

For PA-saturated hardware, $|w_m| = \sqrt{P_m}$ with $P_m = P_{\text{tot}}/M$,
yielding the complex torus
$\Mset_{\text{CM}} = (\sqrt{P_m}\,\Tset)^M$~\cite{ref:absil2008}.

\textbf{Riemannian gradient.} The Euclidean gradient of $f(\bw) =
|\ba_t^H\bw|^2 / \bw^H\Rg\bw$ is
\begin{equation}
\nabla_{\bw^*} f = \frac{\ba_t \ba_t^H \bw}{\bw^H\Rg\bw} - \frac{|\ba_t^H\bw|^2}{(\bw^H\Rg\bw)^2}\Rg\bw.
\end{equation}
Projecting onto $T_\bw \Mset_{\text{CM}}$:
\begin{equation}\label{eq:riemgrad}
\text{grad}\, f = \nabla_{\bw^*} f - \re\!\big(\nabla_{\bw^*} f \odot \bw^* \oslash |\bw|^2\big) \odot \bw,
\end{equation}
with $\odot,\oslash$ element-wise. Element-wise normalization
$\Retr_\bw(\eta\,\bg) = \sqrt{P_m}(\bw + \eta\bg) \oslash |\bw + \eta\bg|$
retracts onto $\Mset_{\text{CM}}$.

\begin{remark}[MVDR-Projected Initialization]\label{rem:mvdr-init}
Initialization is critical for non-convex manifold optimization. Random
phase initialization yields high variance (final SIR varies by several
dB across trials). The MRT initialization
$\bw_0 \propto \ba_t / |\ba_t|$, while attractive in form, lies at a
critical point of $f$ on $\Mset_{\rm CM}$ for any geometry symmetric
about $\bq^\star$: the Wirtinger gradient
$\nabla_{\bw^*} f$ at $\bw_0^{\rm MRT}$ is collinear with $\bw_0^{\rm
MRT}$ itself, so the Riemannian gradient projects to zero and descent
makes no progress. We therefore propose initializing by projecting the
MVDR solution onto $\Mset_{\text{CM}}$:
\begin{equation}\label{eq:mvdr-init}
\bw_0 = \sqrt{P_m}\, \bw_{\text{MVDR}} \oslash |\bw_{\text{MVDR}}|.
\end{equation}
This initialization is motivated by the observation that
$\bw_{\text{MVDR}} = \Rg^{-1}\ba_t$ is intrinsically near constant-modulus
under near-isotropic $\Rg$ (its amplitude variation is modest in our
experiments). Unlike MRT init, $\bw_0$ is \emph{not} a critical point,
so Riemannian descent makes effective progress and reaches SIR within
$0.05$~dB of the free-amplitude bound (Table~\ref{tab:numerics}).
\end{remark}

The full procedure, combining the closed-form MVDR initialization
with Riemannian conjugate-gradient updates, is summarized in
Algorithm~\ref{alg:manifold}.

\begin{algorithm}[t]
\small
\caption{Riemannian Beamformer Design with MVDR Init}\label{alg:manifold}
\begin{algorithmic}[1]
\Require $\ba_t, \Rg, \eta, T$
\State $\bw_{\text{MVDR}} \leftarrow \Rg^{-1}\ba_t$ \Comment{closed-form, $\mathcal{O}(M^3)$}
\State $\bw_0 \leftarrow \sqrt{P_m}\, \bw_{\text{MVDR}} \oslash |\bw_{\text{MVDR}}|$
\For{$k = 0, \dots, T-1$}
\State Compute $\bg_k$ via~\eqref{eq:riemgrad}
\State $\bw_{k+1} \leftarrow \Retr_{\bw_k}(\eta\, \bg_k)$
\EndFor
\State \Return $\bw_T$
\end{algorithmic}
\end{algorithm}

% \subsection{Algorithmic Summary}
% Table~\ref{tab:algos} summarizes the four algorithms as specializations
% of the optimal dual measure $\mu^\star$ characterized in
% Theorem~\ref{thm:structure}, together with their per-iteration
% computational complexities. For LCMV, $K$ denotes the number of
% pre-specified constraint atoms. For the cutting-plane SOCP
% (Algorithm~\ref{alg:socp}), $K_k \triangleq |\Gset_k|$ denotes the
% active-set size at outer iteration $k$, which grows as new
% worst-violator constraints are appended. For the manifold algorithm,
% $T$ is the number of Riemannian conjugate-gradient iterations on
% $\Tset^M$, and the leading $\mathcal{O}(M^3)$ term reflects the
% one-time MVDR-initialization cost. The CM column indicates whether a
% constant-modulus constraint is enforced.

% \begin{table}[ht]
% \centering
% \setlength{\tabcolsep}{4pt}
% \caption{Algorithms via dual specialization.}\label{tab:algos}
% \begin{tabular}{@{}llcc@{}}
% \toprule
% Algorithm & $\mu^\star$ structure & Complexity & CM \\
% \midrule
% MVDR     & continuous (limit)   & $\mathcal{O}(M^3)$              & no  \\
% LCMV     & $K$ atoms            & $\mathcal{O}(M^3{+}K^3)$        & no  \\
% SOCP     & cutting plane        & $\mathcal{O}(K_k M^2 {+} M^3)$/iter & no  \\
% Manifold & implicit             & $\mathcal{O}(M^3 {+} TM^2)$     & yes \\
% \bottomrule
% \end{tabular}
% \end{table}

%% ============================================================================
\section{Numerical Results}\label{sec:numerics}

\subsection{Experimental Setup}

For numerical evaluation, we consider the limiting case $h_g \to 0$,
where the protection region reduces to the ground plane
$\Gset_{\rm gnd}=\Rset^2\times\{0\}$. Since the resulting SIR ceiling is
the tightest, all reported values constitute conservative upper bounds
for scenarios with $h_g>0$. We consider $M=8$ satellite platforms
uniformly distributed over a $D=300$~km arc at altitude
$h_s=600$~km. Each platform carries a
$27\times27$ planar array ($N_m=729$ half-wavelength elements;
aperture $\approx1.95~\mathrm{m}\times1.95~\mathrm{m}$ at
$f=2$~GHz) employing fixed boresight analog beamforming toward
$\bq^\star$. Because the per-platform beamwidth
($\approx4.4^\circ$) exceeds the target--protection angular separation
($\approx1.9^\circ$), the array gain is nearly uniform over
$\{\bq^\star\}\cup\Gset$
(Remark~\ref{rem:per-platform}). Consequently, the local aperture acts
as a separable architectural factor, and the inter-platform digital
beamformer $\bw\in\Cset^M$ is the only optimization variable. The
target is located at $\bq^\star=(0,0,10)$~km with total transmit power
$P_{\rm tot}=64$~kW. Unless otherwise stated, the protection region is
$\Gset=\{(x,0,0):|x|\le20~\mathrm{km}\}$; the full 2D ground plane is
considered where noted.

% %Optimization uses
% $N_{\text{opt}} = 401$ samples; evaluation uses $N_{\text{eval}} = 4001$.
% SOCP is solved by SCS~\cite{ref:scs} via CVXPY~\cite{ref:cvxpy}. All
% algorithms are timed using \texttt{time.perf\_counter()} with median over
% $\ge 10$ runs.

\subsection{Validation of Theorem~\ref{thm:bound} and Proposition~\ref{prop:asymp}}
\begin{figure}[ht]
\centering
\includegraphics[width=0.80\columnwidth]{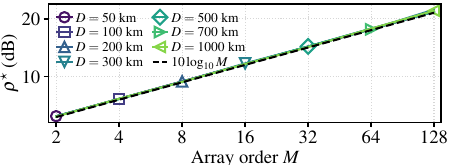}
\vspace{-10pt}
\caption{SIR upper bound $\rho^\star$ vs.~$M$ for several aperture spans $D$.}
\label{fig:bound-vs-MD}
\end{figure}
Fig.~\ref{fig:bound-vs-MD} shows the SIR upper bound
$\rho^\star = 10\log_{10}(\ba_t^H\Rg^{-1}\ba_t)$ as a function of $M$ for
several aperture spans $D$. The bound is essentially independent of $D$
(spread below $0.17$~dB for $M \le 32$, below $0.41$~dB across all
tested $M$) and increases at exactly $\sim 3$~dB per doubling of $M$,
confirming Proposition~\ref{prop:asymp}. The asymptote $10\log_{10}M$
agrees within $0.6$~dB across the full range. The reference operating point ($M=8, D=300$~km) yields $\rho^\star = 9.04$~dB.

\begin{figure}[ht]
\centering
\includegraphics[width=0.80\columnwidth]{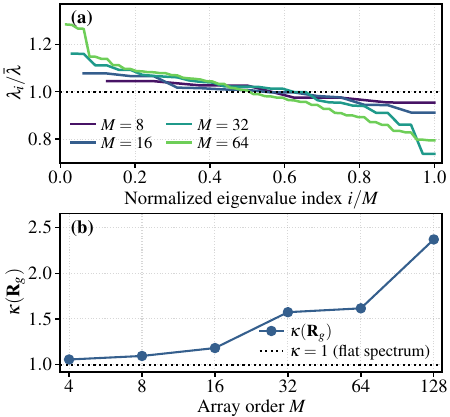}
\vspace{-10pt}
\caption{Spectral diagnostic of $\Rg$.
(a) Normalized eigenvalues. (b) Condition number vs.~$M$.}
\label{fig:spectral}
\end{figure}
Fig.~\ref{fig:spectral} provides a spectral diagnostic
(Corollary~\ref{cor:spectral}). The eigenvalue spectrum of $\Rg$ is
near-flat: normalized eigenvalues stay within $\pm 5\%$ of $\bar\lambda$
at $M=8$ and within $\pm 30\%$ even at $M=64$; the condition number
$\kappa(\Rg)$ remains below $2$ for $M \le 64$ and reaches only $2.4$
at $M=128$. This spectral flatness implies $\Rg \approx \bar\lambda \bI$, reducing the generalized Rayleigh quotient $\ba_t^H\Rg^{-1}\ba_t$ to a pure array gain $\rho^\star \approx \|\ba_t\|^2/\bar\lambda \approx M$, matching the asymptotic form of~\eqref{eq:asymp}.

\subsection{Algorithmic Comparison}
\begin{table}[!h]
\centering\small
\setlength{\tabcolsep}{3.5pt}
\caption{Numerical comparison at $(M{=}8,\,D{=}300$~km$)$, $\rho^\star = 9.04$~dB.}\label{tab:numerics}
\vspace{-5pt}
\begin{tabular}{@{}lcccc@{}}
\toprule
Method & Latency & SIR$_{\text{mean}}$ & SIR$_{\text{peak}}$ & $\Delta\rho^\star$ \\
       &         & (dB)                & (dB)                & (dB) \\
\midrule
MVDR (Alg.~\ref{alg:mvdr})    & $40~\mu$s        & $9.03$         & $1.38$         & $-0.01$ \\
LCMV $K{=}3$                  & $73~\mu$s        & $6.89$         & $-0.92$        & $-2.15$ \\
LCMV $K{=}5$                  & $78~\mu$s        & $0.81$         & $-6.34$        & $-8.23$ \\
LCMV $K{=}10$                 & $93~\mu$s        & $-12.34$       & $-19.58$       & $-21.38$ \\
SOCP (Alg.~\ref{alg:socp})    & $884$~ms         & $9.10$         & $1.18$         & $+0.06$ \\
Manifold (MRT init)           & $5.6$~ms         & $9.10$         & $1.18$         & $+0.06$ \\
\textbf{Manifold (MVDR init)} & $\mathbf{5.7}$~\textbf{ms} & $\mathbf{9.09}$ & $\mathbf{1.43}$ & $\mathbf{+0.05}$ \\
\bottomrule
\end{tabular}
\end{table}
% Table~\ref{tab:numerics} reports the main numerical results, supporting
% three observations of the theory.
% \begin{figure}[ht]
% \centering
% \includegraphics{fig3_pareto.pdf}
% \vspace{-10pt}
% \caption{Latency-SIR Pareto front for the four algorithms.
% (a) Mean-SIR vs latency. (b) Peak-SIR vs latency.}
% \label{fig:pareto}
% \end{figure}

Three observations support the theory:

\emph{(O1)} MVDR achieves $9.03$~dB, within $0.01$~dB of $\rho^\star$,
confirming Theorem~\ref{thm:bound}.

% \emph{(O2)} SOCP achieves $9.10$~dB---marginally above the mean-SIR bound
% because its objective is target maximization with peak penalty, not the
% mean-SIR ratio. The $0.06$~dB margin is within numerical precision.
\emph{(O2)} SOCP achieves $9.10$~dB. The marginal $0.06$~dB excess over $\rho^\star=9.04$~dB is consistent with the sample-mean approximation error of the ground covariance: $\rho^\star$ is computed from $\Rg$ assembled on the dense $N{=}4001$ evaluation grid, whereas the SOCP design uses a coarser $N{=}401$ cutting-plane sample. By Theorem~\ref{thm:bound}, mean-SIR is provably bounded above by $\rho^\star$ on any common grid; the empirically observed $0.06$~dB is therefore a numerical artifact, not a violation of the bound.

\begin{figure}[ht]
\centering
\includegraphics[width=0.80\columnwidth]{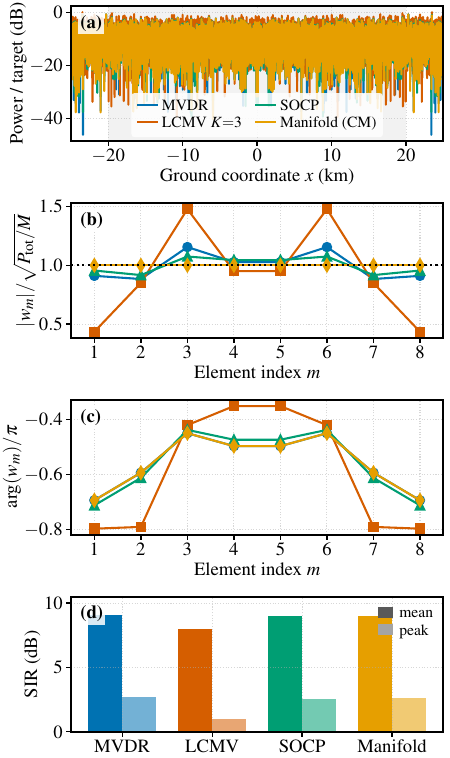}
\vspace{-10pt}
\caption{Beamformer comparison.
(a) Ground power along $y=0$. (b) Per-element amplitude.
(c) Per-element phase. (d) Mean and peak SIR.}
\label{fig:profiles}
\end{figure}

\emph{(O3)} The Manifold algorithm with the proposed MVDR initialization
(Remark~\ref{rem:mvdr-init}) achieves $9.09$~dB, within $0.05$~dB of the
free-amplitude bound, validating that the optimal solution is intrinsically
near constant-modulus. Fig.~\ref{fig:profiles} illustrates the four algorithms' behaviors via
ground power profiles, per-element amplitudes, and phase patterns.
Per-element amplitudes reveal that MVDR and SOCP exhibit modest
variation across elements, while Manifold(CM) is exactly constant by
construction.

\begin{figure}[ht]
\centering
\includegraphics[width=0.80\columnwidth]{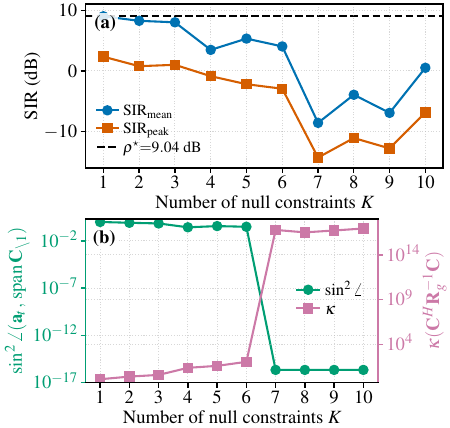}
\vspace{-10pt}
\caption{LCMV collapse vs.~$K$. (a) SIR. (b) Subspace angle and $\kappa(\bC^H\Rg^{-1}\bC)$.}
\label{fig:lcmv-fail}
\end{figure}
\begin{figure*}[ht]
\centering
\includegraphics[width=0.8\linewidth]{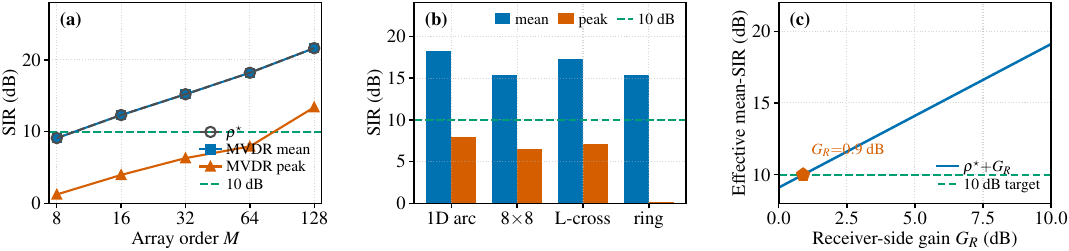}
\vspace{-10pt}
\caption{Architectural remedies. (a) Scaling $M$. (b) 1D vs.~2D layouts. (c) Effective SIR vs.~$G_R$.}
\label{fig:arch}
\end{figure*}
% \begin{figure*}[ht]
% \centering
% \includegraphics[width=0.75\linewidth]{fig8_3d_eval.pdf}
% \vspace{-10pt}
% \caption{3D ground evaluation, 1D arc vs 2D $4{\times}2$ ($M=8$).
% (a)--(b) Ground power on $\mathcal{G} = [-15,15]^2$~km, shared color
% scale. (c) SIR comparison.}
% \label{fig:3d}
% \end{figure*}
\subsection{LCMV Failure Analysis}\label{sec:lcmv-failure}
We characterize the LCMV failure mode quantitatively.
Fig.~\ref{fig:lcmv-fail}(a) shows mean-SIR collapsing from $9.0$~dB at
$K=1$ to below $-10$~dB for $K \ge 7$. Fig.~\ref{fig:lcmv-fail}(b) traces
the underlying mechanism: as $K$ grows, the angle between $\ba_t$ and
the null subspace closes
($\sin^2\angle(\ba_t, \mathrm{span}\,\bC_{\setminus 1}) \to 0$), and
the condition number $\kappa(\bC^H\Rg^{-1}\bC)$ explodes from
$\sim 23$ at $K=4$ to over $10^{16}$ at $K=7$, indicating numerical
singularity. This confirms Remark~\ref{rem:lcmv-fail} and provides a
precise diagnostic not previously analyzed in this regime.

\subsection{Architectural Validation}

We validate the architectural implications of
Remark~\ref{rem:fundamental-limit} through three representative
modifications, summarized in Fig.~\ref{fig:arch}.

\emph{Scaling the number of platforms.}
Fig.~\ref{fig:arch}(a) plots $\rho^\star$, the MVDR mean-SIR, and the
MVDR peak-SIR versus $M$. As predicted by
Theorem~\ref{thm:bound}, the MVDR mean-SIR closely follows
$\rho^\star$ and exhibits the expected
$10\log_{10}M$ scaling. Achieving a peak SIR above $10$~dB requires
$M=128$, corresponding to a sixteen-fold increase over the baseline
configuration.

\emph{Two-dimensional constellation.}
Fig.~\ref{fig:arch}(b) compares the baseline 1D arc with two 2D
constellations for $M=64$ on the $y=0$ slice. Under the symmetric
reference geometry, the 1D arc performs better because it concentrates
all spatial degrees of freedom along the direction in which interference
varies, whereas the 2D layouts allocate aperture to the nearly invariant
$y$-direction. This ordering is not an artifact of the 1D slice:
evaluated over the full 2D ground plane ($x,y \in [-15,15]$~km) at
$M=8$, the 1D arc attains $\rho^\star = 9.21$~dB versus $9.04$~dB for a
2D $4\times 2$ layout, a mere $0.17$~dB gap despite fundamentally
different topologies. Within the near-collinear regime, both target and
dominant interferers lie in a narrow sub-nadir cone where steering
vectors are governed by element count $M$ rather than aperture topology
(Proposition~\ref{prop:asymp}); \emph{increasing $M$ is therefore the
dominant lever, and redistributing a fixed $M$ across 1D and 2D
topologies yields negligible benefit.} This topology-invariance is
specific to the near-collinear regime and breaks down once the
protection region subtends a wide angle relative to the array.

\emph{Receiver-side directivity.}
Fig.~\ref{fig:arch}(c) shows the effective mean-SIR
$\rho^\star+G_R$ for the baseline $M=8$ system. Even modest receiver
gain is sufficient to exceed the target SIR, whereas achieving the same
improvement through inter-platform beamforming alone requires scaling
the array from $M=8$ to $128$. This asymmetry follows directly from
Remark~\ref{rem:per-platform}: receiver-side directivity exploits the
large angular separation between the target and interference region,
whereas transmit-side hierarchical beamforming remains constrained by
the same near-collinear geometry.

% \subsection{3D Domain Evaluation}\label{sec:3d}

% To complement the 1D-slice analysis, we evaluate the
% constellation geometries on a full 2D ground plane
% ($x,y \in [-15, 15]$~km, $z=0$). Fig.~\ref{fig:3d} compares a 1D
% arc with a 2D $4\times 2$ constellation, both with $M=8$. The 1D
% arc achieves $\rho^\star = 9.21$~dB versus $9.04$~dB for the 2D
% configuration, a $0.17$~dB difference despite fundamentally
% different topologies. The 2D geometry's advantage in suppressing
% off-axis ($y \neq 0$) leakage is visible in the power maps but
% does not translate to higher mean-SIR: within the near-collinear
% regime, both target and dominant interferers lie inside a narrow
% sub-nadir cone where steering vectors are governed by element
% count $M$ rather than aperture topology, consistent with
% Proposition~\ref{prop:asymp}. \emph{Increasing $M$ is therefore
% the dominant lever; redistributing a fixed $M$ across 1D and 2D
% topologies yields negligible benefit.} This topology-invariance
% is specific to the near-collinear regime and breaks down once
% the protection region subtends a wide angle relative to the
% array.

\subsection{Robustness to Position Perturbation}\label{sec:numerics-robust}

We validate Remark~\ref{rem:perturbation} by perturbing the design
positions $\bp_m$ with i.i.d. Gaussian noise of standard deviation
$\sigma_p$, then measuring the achieved SIR using the MVDR weight designed
for the nominal geometry. Fig.~\ref{fig:robustness} reports results
%averaged over $50$ Monte Carlo trials
per perturbation level.

\begin{figure}[ht]
\centering
\includegraphics[width=2.75in]{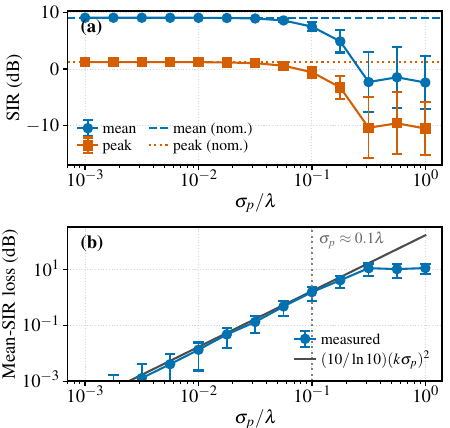}
\vspace{-10pt}
\caption{Robustness vs.~position perturbation $\sigma_p/\lambda$. (a) Mean and peak SIR. (b) Mean-SIR loss vs.~theory.
}
\label{fig:robustness}
\end{figure}

The measured loss tracks the theoretical $(10/\ln 10)(k\sigma_p)^2$
prediction with slope-$2$ scaling on log-log axes throughout the
small-perturbation regime. At $\sigma_p = 1.15$~cm ($0.077\lambda$),
measured loss is $0.9$~dB versus theoretical $1.0$~dB; at $1.98$~cm
($0.13\lambda$), measured $2.6$~dB versus theoretical $2.9$~dB.
Measured losses are slightly below theory because the second-order
Gaussian moment expansion is an upper bound; deviations beyond
$\sigma_p \approx 0.15\lambda$ reflect higher-order corrections
neglected in Remark~\ref{rem:perturbation}. Practical satellite
orbit determination achieves sub-cm precision at LEO
altitudes~\cite{ref:gnss-precise}, which is sufficient for $<0.5$~dB SIR
loss in our regime.

\subsection{Per-Platform Aperture Strategies}\label{sec:per-platform-numerics}
This subsection numerically validates Remark~\ref{rem:per-platform} by
comparing four ways of using a per-platform $N$-element aperture
under the reference geometry of Section~\ref{sec:numerics}. Each
platform hosts an $N_{\rm side} \times N_{\rm side}$ planar array
($N = N_{\rm side}^2$) with $\lambda/2$ element spacing, so that
aperture $=(N_{\rm side}-1)\lambda/2$; at $N_{\rm side}=27$ this
reaches the canonical $1.95~\text{m} \times 1.95~\text{m}$
($\approx 2~\text{m}^2$) phased-array budget at $f=2$~GHz. Let
$\ba_m(\bq) \in \Cset^{N}$ denote the local steering vector
collecting the responses of the $N$ sub-elements of platform $m$ to
field point $\bq$, and write $\ba_{t,m} \triangleq \ba_m(\bq^\star)$.

\textbf{(A) Boresight steering.} Phase-only sub-array weights matched
to the target steering vector,
$\bw_m^{(\rm A)} \propto \ba_{t,m}$.

\textbf{(B) Per-platform null steering.} Local LCMV with
$K=\min(N-1, 5)$ hard nulls placed on the representative points of
$\Gset$.

\textbf{(C) Per-platform MVDR.}
$\bw_m^{(\rm C)} = \bR_{g,m}^{-1}\ba_{t,m}$, where
$\bR_{g,m} \in \Cset^{N \times N}$ is the local ground covariance
built from the geometry of platform $m$ alone; a $10^6$
condition-number cap is enforced via diagonal loading.

\textbf{(E) Joint $MN$-dim MVDR.} A single global MVDR over all $MN$
physical elements: $\bw^{(\rm E)} = \bR_g^{-1}\ba_t \in \Cset^{MN}$.

\begin{figure}[h]
\centering
\includegraphics[width=3.0in]{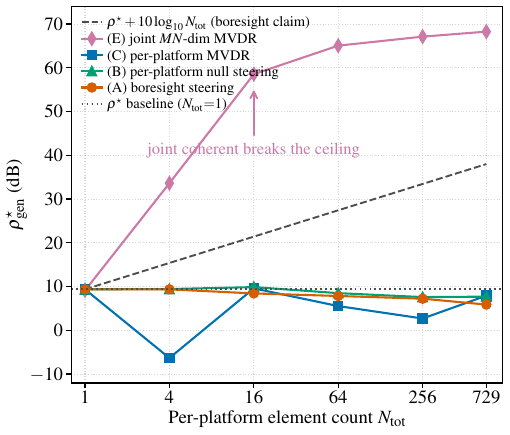}
\vspace{-10pt}
\caption{Per-platform aperture strategies vs.~$N=N_{\rm side}^2$.}
\label{fig:per-platform}
\end{figure}

\textbf{Findings.} As $N$ grows from $1$ to $729$, strategies (A),
(B), (C) all converge to $\rho^\star \approx 9.4$~dB, with (A) and
(C) actually \emph{degrading} mildly. Strategy (E), which abandons
the hierarchical structure and falls outside the small-$\Mtot$
regime in which Proposition~\ref{prop:asymp} is derived, reaches
$\sim68$~dB at $N=729$. The failure of the hierarchical strategies
admits a clean physical interpretation: a per-platform aperture of
$D_a$ has Rayleigh beamwidth $\theta_{\rm BW} \approx \lambda/D_a$,
which at $f=2$~GHz and $D_a=1.95$~m yields
$\theta_{\rm BW} \approx 4.4^\circ$, several times larger than the
$\sim 1.9^\circ$ angular separation between $\bq^\star$ and $\Gset$
seen from each satellite ($\arctan(L/h_s)$). A platform-level beam
therefore cannot resolve target from ground, so any local processing
strategy amplifies both nearly equally. The geometric ceiling of
Theorem~\ref{thm:bound} thus bounds every \emph{hierarchical}
two-tier scheme (per-platform combine $\to$ inter-platform combine):
collapsing each platform into a scalar discards the inter-element
phase information that joint MVDR exploits to form the ultra-narrow
nulls needed at this angular scale. Recovering the joint gain
requires RF backhaul or coherent fronthaul among all $MN$ physical
elements, a synchronization regime beyond the scope of this paper.

\section{Conclusion}\label{sec:conclusion}

This paper developed a unified Lagrangian-dual framework for sparse-array
near-field beam focusing with spatial interference suppression. The
proposed analysis established a closed-form characterization of the
optimal beamformer, a finite-support property of the optimal dual
measure, and a fundamental upper bound on the achievable mean SIR,
thereby unifying MVDR, LCMV, SOCP-based beamforming, and
constant-modulus manifold optimization within a common framework.

The results further showed that near-field interference suppression is
primarily limited by array geometry rather than by the optimization
algorithm. Under near-collinear geometries, the achievable SIR follows
a logarithmic scaling law with array order, while numerical results
demonstrated that constant-modulus implementations operate close to the
derived performance limit. Moreover, the identified SIR ceiling applies
to all hierarchical satellite-side processing architectures, indicating
that further gains require either fully coherent joint processing or
additional receiver-side spatial selectivity.

Future work includes extensions to vector electromagnetic models,
adaptive beamforming for dynamic geometries, joint transmit--receive
optimization, and robust designs accounting for channel uncertainty and
hardware impairments.

% \appendices
% \section{Supplementary Material}\label{app:supp}

\bstctlcite{IEEEexample:BSTcontrol}
\bibliographystyle{IEEEtran}
\bibliography{references}

@IEEEtranBSTCTL{IEEEexample:BSTcontrol,
  CTLdash_repeated_names = "no"
}

@article{ref:nf-survey,
  author  = {Liu, Yuanwei and Wang, Zhaolin and Xu, Jian and Ouyang, Chongjun
             and Mu, Xidong and Schober, Robert},
  title   = {Near-field communications: A tutorial review},
  journal = {IEEE Open J. Commun. Soc.},
  volume  = {4},
  pages   = {1999--2049},
  year    = {2023},
}

@article{ref:dmimo,
  author  = {Lozano, Angel and Heath, Robert W. and Andrews, Jeffrey G.},
  title   = {Fundamental limits of cooperation},
  journal = {IEEE Trans. Inf. Theory},
  volume  = {59},
  number  = {9},
  pages   = {5213--5226},
  year    = {2013},
}

@article{ref:formation,
  author  = {Nanzer, Jeffrey A. and Mghabghab, Serge R. and Ellison,
             Sean M. and Schlegel, Anton},
  title   = {Distributed phased arrays: Challenges and recent advances},
  journal = {IEEE Trans. Microw. Theory Techn.},
  volume  = {69},
  number  = {11},
  pages   = {4893--4907},
  month   = nov,
  year    = {2021},
  doi     = {10.1109/TMTT.2021.3092401},
}

@article{ref:kelley1960,
  author  = {Kelley, J. E.},
  title   = {The cutting-plane method for solving convex programs},
  journal = {J. SIAM},
  volume  = {8},
  number  = {4},
  pages   = {703--712},
  year    = {1960},
}

@article{ref:hettich1993,
  author  = {Hettich, R. and Kortanek, K. O.},
  title   = {Semi-infinite programming: theory, methods, and applications},
  journal = {SIAM Rev.},
  volume  = {35},
  number  = {3},
  pages   = {380--429},
  year    = {1993},
}

@article{ref:shapiro2009,
  author  = {Shapiro, A.},
  title   = {Semi-infinite programming, duality, discretization and
             optimality conditions},
  journal = {Optimization},
  volume  = {58},
  number  = {2},
  pages   = {133--161},
  year    = {2009},
}

@book{ref:bonnans2000,
  author    = {Bonnans, J. F. and Shapiro, A.},
  title     = {Perturbation Analysis of Optimization Problems},
  publisher = {Springer},
  address   = {New York, NY},
  year      = {2000},
}

@article{ref:capon1969,
  author  = {Capon, J.},
  title   = {High-resolution frequency-wavenumber spectrum analysis},
  journal = {Proc. IEEE},
  volume  = {57},
  number  = {8},
  pages   = {1408--1418},
  year    = {1969},
}

@article{ref:frost1972,
  author  = {Frost, III, Otis L.},
  title   = {An algorithm for linearly constrained adaptive array processing},
  journal = {Proc. IEEE},
  volume  = {60},
  number  = {8},
  pages   = {926--935},
  year    = {1972},
}

@article{ref:cox1987,
  author  = {Cox, H. and Zeskind, R. M. and Owen, M. M.},
  title   = {Robust adaptive beamforming},
  journal = {IEEE Trans. Acoust., Speech, Signal Process.},
  volume  = {35},
  number  = {10},
  pages   = {1365--1376},
  year    = {1987},
}

@article{ref:vorobyov2003,
  author  = {Vorobyov, S. A. and Gershman, A. B. and Luo, Z.-Q.},
  title   = {Robust adaptive beamforming using worst-case performance
             optimization},
  journal = {IEEE Trans. Signal Process.},
  volume  = {51},
  number  = {2},
  pages   = {313--324},
  year    = {2003},
}

@book{ref:absil2008,
  author    = {Absil, P.-A. and Mahony, R. and Sepulchre, R.},
  title     = {Optimization Algorithms on Matrix Manifolds},
  publisher = {Princeton Univ. Press},
  address   = {Princeton, NJ},
  year      = {2008},
}

@article{ref:sun2017,
  author  = {Sun, J. and Qu, Q. and Wright, J.},
  title   = {Complete dictionary recovery over the sphere {I}: Overview and
             the geometric picture},
  journal = {IEEE Trans. Inf. Theory},
  volume  = {63},
  number  = {2},
  pages   = {853--884},
  year    = {2017},
}

@article{ref:hjorungnes2007,
  author  = {Hj{\o}rungnes, A. and Gesbert, D.},
  title   = {Complex-valued matrix differentiation: Techniques and key
             results},
  journal = {IEEE Trans. Signal Process.},
  volume  = {55},
  number  = {6},
  pages   = {2740--2746},
  year    = {2007},
}

@article{ref:tchakaloff,
  author  = {Tchakaloff, V.},
  title   = {Formules de cubatures m\'ecaniques \`a coefficients non
             n\'egatifs},
  journal = {Bull. Sci. Math.},
  volume  = {81},
  pages   = {123--134},
  year    = {1957},
}

@article{ref:bayer-teichmann,
  author  = {Bayer, C. and Teichmann, J.},
  title   = {The proof of {Tchakaloff}'s theorem},
  journal = {Proc. Amer. Math. Soc.},
  volume  = {134},
  number  = {10},
  pages   = {3035--3040},
  year    = {2006},
}

@book{ref:carath-toeplitz,
  author    = {Grenander, U. and Szeg{\H{o}}, G.},
  title     = {Toeplitz Forms and Their Applications},
  publisher = {Univ. of California Press},
  address   = {Berkeley, CA},
  year      = {1958},
}

@misc{ref:you2024nfbm,
      title={Near-Field Beam Management for Extremely Large-Scale Array Communications}, 
      author={Changsheng You and Yunpu Zhang and Chenyu Wu and Yong Zeng and Beixiong Zheng and Li Chen and Linglong Dai and A. Lee Swindlehurst},
      year={2023},
      eprint={2306.16206},
      archivePrefix={arXiv},
      primaryClass={eess.SP},
}

@article{ref:wang2024beamfocus,
  author  = {Wang, Z. and Mu, X. and Liu, Y.},
  title   = {Beamfocusing optimization for near-field wideband multi-user
             communications},
  journal = {IEEE Trans. Commun.},
volume={73}, number={1}, pages={555--572}, year={2025}
}

@article{ref:li2022modular,
  author  = {Li, X. and Lu, H. and Zeng, Y. and Jin, S. and Zhang, R.},
  title   = {Near-field modeling and performance analysis of modular
             extremely large-scale array communications},
  journal = {IEEE Commun. Lett.},
  volume  = {26},
  number  = {7},
  pages   = {1529--1533},
  month   = jul,
  year    = {2022},
}

@article{ref:vazquez2016,
  author  = {V\'azquez, M. \'A. and P\'erez-Neira, A. and Christopoulos, D.
             and Chatzinotas, S. and Ottersten, B. and Arapoglou, P.-D.
             and Ginesi, A. and Taricco, G.},
  title   = {Precoding in multibeam satellite communications: Present and
             future challenges},
  journal = {IEEE Wireless Commun.},
  volume  = {23},
  number  = {6},
  pages   = {88--95},
  month   = dec,
  year    = {2016},
}

@article{ref:joroughi2017,
  author  = {Joroughi, V. and V\'azquez, M. \'A. and P\'erez-Neira, A. I.},
  title   = {Generalized multicast multibeam precoding for satellite
             communications},
  journal = {IEEE Trans. Wireless Commun.},
  volume  = {16},
  number  = {2},
  pages   = {952--966},
  month   = feb,
  year    = {2017},
}

@article{ref:perez2019,
  author={Perez-Neira, Ana I. and Vazquez, Miguel Angel and Shankar, M.R. Bhavani and Maleki, Sina and Chatzinotas, Symeon},
  journal={IEEE Signal Processing Magazine}, 
  title={Signal Processing for High-Throughput Satellites: Challenges in New Interference-Limited Scenarios}, 
  year={2019},
  volume={36},
  number={4},
  pages={112-131},
  doi={10.1109/MSP.2019.2894391}}

@article{ref:you2022hybrid,
  author  = {You, L. and Qiang, X. and Li, K.-X. and Tsinos, C. G. and
             Wang, W. and Gao, X. and Ottersten, B.},
  title   = {Hybrid analog/digital precoding for downlink massive {MIMO}
             {LEO} satellite communications},
  journal = {IEEE Trans. Wireless Commun.},
  volume  = {21},
  number  = {8},
  pages   = {5962--5976},
  month   = aug,
  year    = {2022},
}

@article{ref:you2022twin,
  author  = {You, L. and Qiang, X. and Li, K.-X. and Tsinos, C. G. and
             Wang, W. and Gao, X. and Ottersten, B.},
  title   = {Massive {MIMO} hybrid precoding for {LEO} satellite
             communications with twin-resolution phase shifters and nonlinear
             power amplifiers},
  journal = {IEEE Trans. Commun.},
  volume  = {70},
  number  = {8},
  pages   = {5543--5557},
  month   = aug,
  year    = {2022},
}

@article{ref:chen2024survey,
  author  = {Chen, R. and Long, W.-X. and Wang, B.-Q. and He, Y. and
             Sun, R.-J. and Cheng, N. and Zheng, G. and Niyato, D.},
  title   = {Multibeam high throughput satellite: Hardware foundation,
             resource allocation, and precoding},
  journal = {arXiv preprint arXiv:2508.00800},
  year    = {2024},
}

@article{ref:abdelsadek2022dmimo,
  author  = {Abdelsadek, M. Y. and Kurt, G. K. and Yanikomeroglu, H.},
  title   = {Distributed massive {MIMO} for {LEO} satellite networks},
  journal = {IEEE Open J. Commun. Soc.},
  volume  = {3},
  pages   = {2162--2177},
  month   = nov,
  year    = {2022},
}

@inproceedings{ref:abdelsadek2021cf,
  author    = {Abdelsadek, M. Y. and Yanikomeroglu, H. and Kurt, G. K.},
  title     = {Future ultra-dense {LEO} satellite networks: A cell-free
               massive {MIMO} approach},
  booktitle = {Proc. IEEE Int. Conf. Commun. Workshops (ICC Workshops)},
  month     = jun,
  year      = {2021},
  pages     = {1--6},
}

@article{ref:humadi2024dynamic,
  author  = {Humadi, K. and Yanikomeroglu, H.},
  title   = {Distributed massive {MIMO} system with dynamic clustering in
             {LEO} satellite networks},
  journal = {arXiv preprint arXiv:2404.06024},
  month   = apr,
  year    = {2024},
}

@inproceedings{ref:storek2020testbed,
  author    = {Storek, K.-U. and Schwarz, R. T. and Knopp, A.},
  title     = {Multi-satellite multi-user {MIMO} precoding: Testbed and
               field trial},
  booktitle = {Proc. IEEE Int. Conf. Commun. (ICC)},
  month     = jun,
  year      = {2020},
  pages     = {1--7},
}

@inproceedings{ref:merlano2024swarm,
  author    = {Merlano-Duncan, J. C. and Ha, V. N. and Krivochiza, J. and
               others},
  title     = {Harnessing the power of swarm satellite networks with
               wideband distributed beamforming},
  booktitle = {Proc. IEEE Int. Conf. Commun. (ICC)},
  month     = jun,
  year      = {2024},
  pages     = {1--6},
}

@book{ref:gnss-precise,
  editor    = {Teunissen, P. J. G. and Montenbruck, O.},
  title     = {Springer Handbook of Global Navigation Satellite Systems},
  publisher = {Springer},
  address   = {Cham, Switzerland},
  year      = {2017},
}

@techreport{ref:3gpp-ntn-r19,
  author      = {{3GPP}},
  title       = {Solutions for {NR} to support non-terrestrial networks
                 ({NTN})},
  institution = {3rd Generation Partnership Project (3GPP)},
  number      = {TR 38.821, Release~17 (2022); TS 38.108, Release~19 (2025)},
  year        = {2022},
  note        = {Updated provisions for regenerative payload},
}

@article{ref:lu2022xlarray,author={Lu, H. and Zeng, Y.},title={Communicating with extremely large-scale array/surface: Unified modeling and performance analysis},journal={IEEE Trans. Wireless Commun.},volume={21},number={6},pages={4039--4053},month=jun,year={2022}}

@ARTICLE{ref:wu2023ldma,
  author  = {Wu, Zidong and Dai, Linglong},
  journal = {IEEE Journal on Selected Areas in Communications},
  title   = {Multiple Access for Near-Field Communications: {SDMA} or {LDMA}?},
  year    = {2023},
  volume  = {41},
  number  = {6},
  pages   = {1918--1935},
  doi     = {10.1109/JSAC.2023.3275616}
}

@ARTICLE{ref:zhang2022beamfocus,
  author  = {Zhang, Haiyang and Shlezinger, Nir and Guidi, Francesco and Dardari, Davide and Imani, Mohammadreza F. and Eldar, Yonina C.},
  journal = {IEEE Transactions on Wireless Communications},
  title   = {Beam Focusing for Near-Field Multiuser {MIMO} Communications},
  year    = {2022},
  volume  = {21},
  number  = {9},
  pages   = {7476--7490},
  doi     = {10.1109/TWC.2022.3158894}
}

\end{document}